\definecolor{mygreen}{rgb}{0,.5,0}
\newcommand{\M}{{\mathcal{M}}}
\newcommand{\cS}{{\mathcal{S}}}
\newcommand{\R}{{\mathbb{R}}}
\newcommand{\be}{\begin{equation}}
\newcommand{\ee}{\end{equation}}
\newcommand{\bee}{\begin{equation*}}
\newcommand{\eee}{\end{equation*}}
\newtheorem{thm}{\protect\theoremname}
\theoremstyle{plain}
  \theoremstyle{plain}
  \theoremstyle{plain}
  \newtheorem*{assumption*}{\protect\assumptionname}
  \theoremstyle{plain}
  \theoremstyle{remark}
  \newtheorem*{rem*}{\protect\remarkname}
  \theoremstyle{plain}
  \providecommand{\assumptionname}{Assumption}
  \providecommand{\lemmaname}{Lemma}
  \providecommand{\propositionname}{Proposition}
  \providecommand{\remarkname}{Remark}
\providecommand{\theoremname}{Theorem}
\providecommand{\corollaryname}{Corollary}
\begin{document}

\title{EM Algorithm and Stochastic Control in Economics}

\date{November 6, 2016}

\author{Steven Kou\thanks{Risk Management Institute and Department of Mathematics, National University of Singapore, 21 Heng Mui Keng Terrace, Singapore. Email: matsteve@nus.edu.sg.}  \and Xianhua Peng\thanks{Department of Mathematics, Hong Kong University of Science and Technology, Clear Water Bay, Kowloon, Hong Kong. Email: maxhpeng@ust.hk.} \and Xingbo Xu\thanks{Department of Industrial Engineering and Operations Research, Columbia University, New York, New York 10027, USA. Email: xx2126@columbia.edu.}}

\maketitle

\begin{abstract}
Generalising the idea of the classical EM algorithm that is widely used for computing maximum likelihood estimates, we propose an EM-Control (EM-C) algorithm for solving multi-period finite time horizon stochastic control problems. The new algorithm sequentially updates the control policies in each time period using Monte Carlo simulation in a forward-backward manner; in other words, the algorithm goes forward in simulation and backward in optimization in each iteration.
Similar to the EM algorithm, the EM-C algorithm has the monotonicity of performance improvement in each iteration, leading to good convergence properties.
We demonstrate the effectiveness of the algorithm by solving stochastic control problems in the monopoly pricing of perishable assets and in the study of real business cycle.

\emph{Keywords}: EM algorithm, stochastic control, recursive model, dynamic programming, monopoly pricing, real business cycle, numerical methods, stochastic approximation

\emph{JEL classification}: C44, C61, C63, D4, E3



\end{abstract}

\baselineskip 18pt

\section{Introduction}

\subsection{Motivation and Main Results}

Stochastic control problems are widely used in macroeconomics (e.g., the study of real business cycle), microeconomics (e.g., utility maximization problem), and marketing (e.g., monopoly pricing of perishable assets). These control problems are likely to be of finite time horizon. However, a finite time horizon stochastic control problem is more difficult than the related infinite horizon problem, because the optimal control policy is not necessarily stationary. Usually one has to resort to numerical methods to find solutions for such finite time horizon stochastic control problems. Due to the curse of dimensionality, it is generally difficult to numerically solve such problems, especially in high dimension and for complicated stochastic dynamics.

To overcome these difficulties, in this paper we attempt to solve finite time horizon stochastic control problems by using Monte Carlo simulation.  More precisely, we propose a new algorithm, EM-Control (EM-C) algorithm, that sequentially updates the control policies in each time period using Monte Carlo simulation in a forward-backward manner; in other words, the algorithm goes forward in simulation and backward in optimization in each iteration.
We demonstrate the effectiveness of the algorithm by solving stochastic control problems in the monopoly pricing of perishable assets and in the study of real business cycle.

Our algorithm is motivated from an algorithm in a different field, the classical Expectation-Maximization (EM) algorithm (\cite*{Dempster1977}), which is widely used for computing maximum likelihood estimates (MLEs) for missing data or latent variables.
In each iteration, the EM algorithm first calculates the conditional distribution of the missing data based on parameters from the
previous iteration, and then maximizes the expectation of the full likelihood function based on the just updated conditional distribution to get
updated parameters. Interestingly, the EM algorithm can be viewed as an algorithm that in each iteration alternatively maximizes an objective functional with one distribution parameter and one ordinary parameter: the distribution parameter is the conditional distribution of the missing data, and the ordinary parameter is the parameter of the original MLE problem; see Section \ref{subsec:EM_algo}.

Our EM-C algorithm generalizes the idea of the EM algorithm to solve multi-period finite time horizon stochastic control problems, for which there is a control policy corresponding to each time period.
The EM-C algorithm is an iterative one that updates one control policy corresponding to one time period at each step in the iterations.
Inheriting the spirit of the EM algorithm, the EM-C algorithm updates the control policy at a given time period by optimizing the objective function with respect to the control policy at that time period only, and with
the control policies at all other periods fixed at their most up-to-date status in the iteration of the algorithm.

What distinguishes the new EM-C algorithm from existing algorithms is fourfold: (i) Similar to the EM algorithm, the proposed EM-C algorithm has the monotonicity of performance improvement at each iteration, which leads to good convergence properties of the EM-C algorithm.  (ii) The EM-C algorithm does not assume particular
dynamics of the evolution of states (i.e. not limited to particular setting of stochastic processes), just as the EM algorithm can be applied to broad probability distributions.
(iii)  The EM-C algorithm does not use the Bellman equation; in contrast, many numerical algorithms in the literature rely on the Bellman equation or its approximation.\footnote{There are stochastic control problems for which the Bellman equation may not hold. For example, when the utility function in the general control problem \eqref{equ:multi_per_obj_gen} is not time-separable, then such problem may not have Bellman equation.} (iv) Unlike many existing algorithms, the EM-C algorithm treats finite time horizon stochastic control problems, where the optimal policy is not necessarily stationary.




\subsection{Literature review}

As the EM algorithm is one of the most cited algorithms in statistics, there have been numerous extensions of the algorithm; see, e.g.,
\citet*{Wei-Tanner-1990}, \citet*{Meng-Rubin-1993},
\cite*{Gu1998}, and a review in \citet[][Chap. 13]{Lange-2010}, among others. The EM algorithm allows for general distributional assumptions and has the advantageous property of monotonic convergence (\cite*{wu1983convergence}).

There is a large literature on stochastic control in economics.
\citet*{Hansen-Sargent-2013} provide detailed discussions on stochastic control problems in which
the Bellman equations can be solved analytically. \citet*{Ljungqvist-Sargent-2013} discuss dynamic programming methods and their applications to a variety of problems in economics.
\citet{Judd-1998} and \citet{Miranda-Fackler-2002} provide comprehensive treatment of recursive methods for solving stochastic control problems in economics.
\citet*{stokey1989recursive}  describe many examples of modeling theoretical problems in economics using dynamic programming and other recursive methods, including optimal economic growth, resource extraction, principal agent problems, public finance, business investment, asset pricing, factor supply, and industrial organization.
%

\citet*{Flemming-Soner-2005} provide in-depth discussion on continuous time stochastic control problems and their applications. \citet*{Kushner-Dupuis-2001} give an excellent survey of numerical methods for solving continuous time stochastic control problems by using Markov chains. There have also been many studies on the numerical solutions to continuous time stochastic control problems in mathematical finance; see, e.g., \cite*{ZHANG2004}, \cite*{Bouchard2004}, \cite*{Crisan2010}, \citet*{Fahim-Touzi-Warin-2011}, \cite*{Kharroubi2013}, \cite*{Kharroubi2013a}, and \citet*{GZZ-2014}, among others. Most of these studies focus on particular stochastic processes, e.g. discretized diffusion processes or L\'{e}vy processes, but our EM-C algorithm can be applied to general stochastic processes. Moreover, our method is a simulation based method, suitable for high dimensional problems.

Approximate dynamic programming (ADP) has been developed\footnote{ADP has also evolved under the name of reinforcement learning in computer science (see, e.g.,  \citet*{Sutton-Barto-1998}).} for dealing with the three sources of curses of dimensionality: high dimensionality of state space, control policy space, and random shock space; see the books by \citet{Powell-2011} and \citet*{Bertsekas-2012}.
ADP algorithms can be broadly classified into two categories: value iteration and policy iteration.\footnote{Many ADP algorithms focus on infinite  time horizon problems where the optimal value
function and policy are stationary. In contrast, our EM-C algorithm focuses on finite time horizon problems where neither the optimal
value function nor the optimal policy is stationary.} Most ADP algorithms are value iteration algorithms,
which approximate the value function by employing the Bellman equation.\footnote{Value function iteration is closely related to the duality approach for stochastic dynamic programming; see \citet*{Brown-Smith-Sun-2010}, \citet*{Brown-Smith-2014}, \citet*{Brown-Haugh-2014}.}
Such algorithms are efficient when the value function can be well approximated, but there is no guarantee of monotonicity of value function improvement otherwise.
As an alternative, a policy iteration algorithm keeps track of
the policy instead of the value function. At each period, a value function is calculated based on a policy estimated previously and then improved within the policy space. 
The value iteration and policy iteration ADP algorithms may not have monotonic improvement of the value function at each iteration.

Our algorithm is related to but is fundamentally different from the policy iteration ADP algorithms mainly in that: (i) the EM-C algorithm does not use the Bellman equation; (ii) the EM-C algorithm has monotonic improvement of the value function at each iteration; and (iii) the EM-C algorithm can be applied to general control problems in which the objective functions may not be time-separable.

ADP is closely related to the problem of American option pricing using simulation. \citet*{Broadie-Glasserman-1997} develop an implicit approximate dynamic programming algorithm for American option pricing that assigns equal weights to each branch in a randomly sampled tree. \cite*{Longstaff2001} and \citet*{Tsitsiklis-VanRoy-2001} combine simulation with regression on a set of basis functions to develop low-dimensional approximation to value functions; they are related to the stochastic mesh method introduced in
\citet*{Broadie-Glasserman-2004} and correspond to an implicit choice of mesh weights. See also \citet[][Ch. 8]{Glasserman-2004} for more discussion.

The literature of Markov decision processes mainly concerns multi-period stochastic control problems with a finite state space or a finite control space. There are also simulation-based algorithms for Markov decision processes; see, e.g., the books by \citet*{chang2007simulation} and \citet*{Gosavi-2015} for comprehensive review and discussion.
The main differences between these algorithms and our EM-C algorithm are: (i) The EM-C algorithm has monotonicity in each iteration; (ii) The EM-C algorithm does not utilize Bellman equation.



The rest of the paper is organized as follows. In Section \ref{sec:cem_algo}, we propose the EM-C algorithm. In Section \ref{sec:Convergence-Analysis}, we show that the EM-C algorithm improves the objective function in each iteration and hence has good convergence properties. In Section \ref{sec:Simulation_Based_Algorithm}, we propose an implementation of the EM-C algorithm based on simulation and the stochastic approximation algorithm. Section \ref{sec:dynamic_pricing} and Section \ref{sec:business_cycle} present two applications of the EM-C algorithm in monopoly pricing for airline tickets and real business cycles respectively.

\section{The EM-Control (EM-C) Algorithm}\label{sec:cem_algo}

\subsection{The EM Algorithm}\label{subsec:EM_algo}

Suppose we observe the data $z$ of a random vector $Z$ but not the data of the random vector $Y$. Assume that the joint probability density function of $X=(Y, Z)$ is given by $p(y, z\mid \theta)$ with $\theta$ being the parameter. The probability density function of $Z$ is given by $p(z\mid \theta)$. The MLE of the parameter $\theta$ is obtained by maximizing the log likelihood $L(\theta)=\log p(z\mid \theta)$.

Starting from an initial estimate $\theta^0$, at the $k$th iteration the EM algorithm updates $\theta^{k-1}$ to be $\theta^k$ by two steps:
\begin{enumerate}
  \item
  E step: Compute $q^{k}(y)=p(y\mid z, \theta^{k-1})$, which
  is the conditional density of the missing data $y$ given the observed data $z$ and the parameter estimate $\theta^{k-1}$ obtained from previous iteration.
  \item
  M step: Set $\theta^{k}$ to be the $\theta$ that maximizes
  $$E_{q^k}[\log p(y, z\mid \theta)]:=\int \log p(y, z\mid \theta)q^{k}(y)dy,$$
  where $E_{q^{k}}$ denotes the expectation with respect to $y$ under the conditional distribution $q^{k}$.
\end{enumerate}

\citet*{Neal-Hinton-1999} provides an alternative view of the EM algorithm in which both the E-step and the M-step are maximizing (or at least not decreasing) the same objective functional. In fact, define a functional $F(q, \theta)$ as
\begin{equation}\label{equ:EM_obj}
  F(q, \theta):=E_{q}[\log p(y, z\mid \theta)] + H(q)=\int \log p(y, z\mid \theta)q(y)dy+ H(q),
\end{equation}
where $H(q):=-\int \log q(y)\cdot q(y)dy$ is the entropy of the probability density $q$. Then, \citet[][Theorem 1]{Neal-Hinton-1999} shows that the E-step and M-step of the EM algorithm at the $k$th iteration are equivalent to
\begin{enumerate}
  \item
  E step: Set $q^{k}$ to be $\arg\max_q F(q, \theta^{k-1})$.
  \item
  M step: Set $\theta^{k}$ to be $\arg\max_{\theta} F(q^k, \theta)$.
\end{enumerate}
Hence, at each iteration, the EM algorithm first maximizes the objective functional $F(q, \theta)$ with respect to $q$ only and with $\theta$ fixed, and then maximizes $F(q, \theta)$ with respect to $\theta$ only and with $q$ fixed.

The EM algorithm allows for very general distribution
assumption for $(Y, Z)$; it also has monotonicity in each iteration which lead to good convergence properties (\citet{wu1983convergence}).

%
%
%

\subsection{The Multi-Period Finite Time Horizon Stochastic Control Problem}

Now we consider a general multi-period finite time horizon stochastic control problem, which allows for vector-valued control policies, vector-valued states, and vector-valued random shocks. Let $n_c$ be the dimension of the control policy and let $n_s$ be the dimension of the state.
Suppose there are $T$ time periods and at period $0$ a decision maker observes the initial state $s_{0}\in \mathbb{R}^{n_s}$.\footnote{Without loss of generality, in this paper, we assume the initial state $s_{0}$ is known at period 0. In fact, if $s_0$ is random in a problem, one can simply relabel period 0 in that problem to be period 1 and then the random $s_0$ in that problem becomes $s_1$ in our problem formulation.}
At the $t$th period the decision maker observes the state $s_t\in \mathbb{R}^{n_s}$ and then chooses a $n_c$-dimensional control $c_t\in\sigma(s_t)$, the sigma field generated by $s_t$. Hence,
the policy $c_{t}$ is adapted to the information available up to period $t$ and can be represented as a function of $s_t$. Since $s_0$ is known at period $0$, $c_0\in\mathbb{R}^{n_c}$ is also deterministic. For $t\geq 1$, we assume that
\begin{equation}\label{eq:c_t_s_t}
  c_t = c(t, s_t, \theta_t), t\geq 1,
\end{equation}
where $c(\cdot)$ is a function and $\theta_t=(\theta_{t, 1}, \theta_{t, 2}, \ldots, \theta_{t, d})'\in \mathbb{R}^d$ is the vector of parameters for the $t$th period. For example, one may assume that the policy $c_t$ is linearly spanned by
a set of basis functions, i.e.,
 $ c_t:=\sum_{i=1}^{d}\theta_{t, i}\phi_{t, i}(s_t)$,  $t\geq 1,$
where $\{\phi_{t, i}: \mathbb{R}^{n_s}\to \mathbb{R}^{n_c}, i=1, \ldots, d\}$ is the set of basis functions for the $t$th period.
The state $s_{t+1}$ is determined by $s_t$ and the control $c_t$ by the following state evolution equation
\begin{equation}\label{eq:state_evo}
  s_{t+1}=\psi_{t+1}(s_{t},c_{t},z_{t+1}),
\end{equation}
where $\psi_{t+1}(\cdot)$ is the state evolution function and
$z_{t+1}\in\mathbb{R}^{n_z}$ is the random vector denoting the random shock in the $(t+1)$th period.
Path dependence can be accommodated by including auxiliary
variables in $s_t$.
The state evolution dynamics in \eqref{eq:state_evo} is a general one, which is not restricted to discretized diffusion processes or L\'{e}vy
processes.

At period 0, the decision maker wishes to choose the optimal control $c_0\in\mathbb{R}^{n_c}$ and the sequence of control parameters $\theta_1, \ldots, \theta_{T-1}$, which determines the sequence of controls $c_1, \ldots, c_{T-1}$, so as to maximize the expectation of his or her utility
{\allowdisplaybreaks
\begin{align}
\max_{(c_0, \theta_{1}, \ldots, \theta_{T-1})\in \Theta}\ \ & E_0\left[\sum_{t=0}^{T-1}u_{t+1}(s_{t+1}, s_t, c_{t})\middle | c_0, \theta_{1}, \ldots, \theta_{T-1}\right]\label{equ:multi_per_obj}\\
 \text{s.t.}\,\ \ \ \ \ \ \ \  & c_t = c(t, s_t, \theta_t), t = 0, 1, \ldots, T-1,\label{equ:control}\\
 & s_{t+1}=\psi_{t+1}(s_{t},c_{t},z_{t+1}), t = 0, 1, \ldots, T-1,\notag
\end{align}}%
where $\Theta$ is a subset of $\mathbb{R}^{n}$ with $n=n_c+(T-1)d$; $u_{t+1}(\cdot)$ is the utility function of the decision maker in the $(t+1)$th period. It is worth noting that the utility function in the first period can include utility at period $0$.

A control problem more general than the problem \eqref{equ:multi_per_obj} is given by
{\allowdisplaybreaks
\begin{align}
\max_{(c_0, \theta_{1}, \ldots, \theta_{T-1})\in \Theta}\ \ & E_0\left[u(s_0, c_0, s_1, c_1, \ldots, s_{T-1}, c_{T-1}, s_T)\middle | c_0, \theta_{1}, \ldots, \theta_{T-1}\right]\label{equ:multi_per_obj_gen}\\
 \text{s.t.}\,\ \ \ \ \ \ \ \  & c_t = c(t, s_t, \theta_t), t = 0, 1, \ldots, T-1,\notag\\
 & s_{t+1}=\psi_{t+1}(s_{t},c_{t},z_{t+1}), t = 0, 1, \ldots, T-1,\notag 
\end{align}}%
where $u(s_0, c_0, s_1, c_1, \ldots, s_{T-1}, c_{T-1}, s_T)$ is a general utility function that may not be time-separable as the one in \eqref{equ:multi_per_obj}. For simplicity of exposition, we will present our EM-C algorithm for the problem \eqref{equ:multi_per_obj}; however, the EM-C algorithm also applies to the general problem \eqref{equ:multi_per_obj_gen}; see Appendix \ref{app:CEM-general-control} for details.

For simplicity of notation, we denote $x=(c_0, \theta_1, \theta_2, \ldots, \theta_{T-1})$ and denote
the objective function of problem \eqref{equ:multi_per_obj} by
 \be\label{equ:utility_func}
 U(x):=U(c_0, \theta_1, \theta_2, \ldots, \theta_{T-1}):=E_0\left[\sum_{t=0}^{T-1}u_{t+1}(s_{t+1}, s_t, c_t)\middle | c_0, \theta_{1}, \ldots, \theta_{T-1}\right].
 \ee
In general, the expectation in \eqref{equ:utility_func} cannot be evaluated in closed-form, and hence $U(x)$ does not have an analytical form.


\subsection{Description of the EM-Control (EM-C) Algorithm}

In this subsection, we generalize the idea of the EM algorithm to propose the EM-Control (EM-C) algorithm for solving \eqref{equ:multi_per_obj}.
The EM-C algorithm is an iterative algorithm involving multiple rounds of the back-to-front updates. Inheriting the spirit of the EM algorithm, the EM-C algorithm updates the control policy at a given time period by optimizing the objective function with respect to the control policy at that time period only, and with
the control policies at all other periods fixed at their most up-to-date status in the iteration of the algorithm.

More precisely, suppose that after the $(k-1)$th iteration, the control policy parameter is $x^{k-1}:=(c^{k-1}_0, \theta^{k-1}_{1}, \theta^{k-1}_{2}, \ldots, \theta^{k-1}_{T-1})$. In the $k$th iteration, the EM-C algorithm updates $x^{k-1}$ to be $x^{k}:=(c^{k}_0, \theta^{k}_{1}, \theta^{k}_{2}, \ldots, \theta^{k}_{T-1})$ by the updating rule:
    \be\label{equ:def_pts_map}
    x^k\in M(x^{k-1}),
    \ee
 where $M(\cdot)$ is a point-to-set map on $\Theta$ (i.e., $M(\cdot)$ maps a point in $\Theta$ to a subset of $\Theta$) that represents the updating rule. The EM-C algorithm updates $c^{k-1}_0, \theta^{k-1}_{1}, \theta^{k-1}_{2},\ldots, \linebreak \theta^{k-1}_{T-1}$ backward in time; at each time period $t=T-1, T-2, \ldots, 1$, the algorithm updates $\theta^{k-1}_{t}$ to be $\theta^{k}_{t}$ and then moves backward to update $\theta^{k-1}_{t-1}$; at last, the algorithm updates $c^{k-1}_0$ to be $c^{k}_0$.

Next, we specify the precise updating rule in \eqref{equ:def_pts_map}. In the $k$th iteration, before updating the control parameter at period $t\in \{T-1, T-2, \ldots, 1\}$, the control policy parameter is $(c^{k-1}_0, \theta^{k-1}_{1}, \ldots, \theta^{k-1}_{t-1}, \theta^{k-1}_{t}, \theta^{k}_{t+1}, \theta^{k}_{t+2}, \ldots, \theta^{k}_{T-1})$. Then, at period $t$, the EM-C algorithm updates $\theta^{k-1}_{t}$ to be $\theta^{k}_{t}$ such that
{\allowdisplaybreaks
\begin{align}\label{eq:monot}
            & U(c_0^{k-1}, \theta_1^{k-1}, \theta_2^{k-1}, \ldots, \theta_{t-1}^{k-1}, \theta_{t}^{k}, \theta_{t+1}^{k}, \ldots, \theta_{T-1}^{k})\notag\\
             \geq{} & U(c_0^{k-1}, \theta_1^{k-1}, \theta_2^{k-1}, \ldots, \theta_{t-1}^{k-1}, \theta_{t}^{k-1}, \theta_{t+1}^{k}, \ldots, \theta_{T-1}^{k}),
\end{align}}%
which can be easily shown to be equivalent to
{\allowdisplaybreaks
\begin{align}\label{equ:new_2}
&  E_0\left[\sum_{j=t}^{T-1} u_{j+1}(s_{j+1},s_{j},c_j)\middle | c_0^{k-1}, \theta_1^{k-1}, \ldots, \theta_{t-1}^{k-1}, \theta_t^k, \theta_{t+1}^k, \ldots, \theta_{T-1}^k \right]\notag\\
\geq{} &  E_0\left[\sum_{j=t}^{T-1} u_{j+1}(s_{j+1},s_{j},c_j)\middle | c_0^{k-1}, \theta_1^{k-1}, \ldots, \theta_{t-1}^{k-1}, \theta_t^{k-1}, \theta_{t+1}^k, \ldots, \theta_{T-1}^k \right];
\end{align}}%
see Appendix \ref{app:simple_deriv} for details.
Therefore, such $\theta^{k}_{t}$ that satisfies \eqref{eq:monot} can be obtained by finding a suboptimal (optimal) solution to the problem
{\allowdisplaybreaks
            \begin{align*}
              \max_{\theta_{t}\in \Theta_t}\
                   & E_0\left[\sum_{j=t}^{T-1} u_{j+1}(s_{j+1},s_j, c_j)\middle | c_0^{k-1}, \theta_1^{k-1}, \ldots, \theta_{t-1}^{k-1}, \theta_t, \theta_{t+1}^k, \ldots, \theta_{T-1}^k\right],
            \end{align*}}%
where $\Theta_t=\{\theta\in \R^d\mid (c_0^{k-1}, \theta_1^{k-1}, \ldots, \theta_{t-1}^{k-1}, \theta, \theta_{t+1}^k, \ldots, \theta_{T-1}^k)\in\Theta\}$. After $\theta^{k-1}_{t}$ is updated to be $\theta^{k}_{t}$, the control policy parameter is updated from $(c^{k-1}_0, \theta^{k-1}_{1}, \ldots, \theta^{k-1}_{t-1}, \theta^{k-1}_{t},\linebreak \theta^{k}_{t+1},\ldots, \theta^{k}_{T-1})$ to $(c^{k-1}_0, \theta^{k-1}_{1}, \theta^{k-1}_{t-1}, \theta^{k}_{t}, \theta^{k}_{t+1},\ldots, \theta^{k}_{T-1})$.

Similarly, at period 0, before $c_0^{k-1}$ is updated, the control policy parameter is
$(c^{k-1}_0, \theta^{k}_{1}, \ldots, \theta^{k}_{T-1})$. Then, the EM-C algorithm updates $c_0^{k-1}$ to be $c_0^k$ such that
            \begin{align}\label{eq:mono_0}
            & U(c_0^{k}, \theta_1^{k}, \theta_2^{k}, \ldots, \theta_{T-1}^{k}) \geq U(c_0^{k-1}, \theta_1^{k}, \theta_2^{k}, \ldots, \theta_{T-1}^{k}).
            \end{align}
Then, the control parameter is updated from $(c^{k-1}_0, \theta^{k}_{1}, \ldots, \theta^{k}_{T-1})$ to $(c^{k}_0, \theta^{k}_{1}, \ldots, \theta^{k}_{T-1})$.

 In short, Algorithm \ref{alg:Main_Algorithm} summarizes the EM-C algorithm for solving problem \eqref{equ:multi_per_obj}.

\begin{algorithm}[htbp]
\caption{The EM-C algorithm for solving problem \eqref{equ:multi_per_obj}}
\label{alg:Main_Algorithm}
\begin{enumerate}
\item Initialize $k=1$ and $x^0=(c^0_0, \theta^0_{1}, \theta^0_{2}, \ldots, \theta^0_{T-1})$.  
\item Iterate $k$ until some stopping criteria are met. In the $k$th iteration, update $x^{k-1}=(c^{k-1}_0, \theta^{k-1}_{1}, \theta^{k-1}_{2}, \ldots, \theta^{k-1}_{T-1})$ to $x^{k}=(c^{k}_0, \theta^{k}_{1}, \theta^{k}_{2}, \ldots, \theta^{k}_{T-1})$
    by moving backwards from $t=T-1$ to $t=0$ as follows:

    \item[(a)] Move backward from $t=T-1$ to $t=1$. At each period $t$, update $\theta_{t}^{k-1}$ to be $\theta_{t}^{k}$ such that
{\allowdisplaybreaks
\begin{align*}
&  E_0\left[\sum_{j=t}^{T-1} u_{j+1}(s_{j+1},s_{j},c_j)\middle | c_0^{k-1}, \theta_1^{k-1}, \ldots, \theta_{t-1}^{k-1}, \theta_t^k, \theta_{t+1}^k, \ldots, \theta_{T-1}^k \right]\notag\\
\geq{} &  E_0\left[\sum_{j=t}^{T-1} u_{j+1}(s_{j+1},s_{j},c_j)\middle | c_0^{k-1}, \theta_1^{k-1}, \ldots, \theta_{t-1}^{k-1}, \theta_t^{k-1}, \theta_{t+1}^k, \ldots, \theta_{T-1}^k \right].
\end{align*}}%
Such $\theta^{k}_{t}$ can be set as a suboptimal (optimal) solution to the problem
            \begin{align}\label{eq:opt_t_S}
       &    \max_{\theta_{t}\in \Theta_t}
                    E_0\left[\sum_{j=t}^{T-1} u_{j+1}(s_{j+1},s_j, c_j)\middle | c_0^{k-1}, \theta_1^{k-1}, \ldots, \theta_{t-1}^{k-1}, \theta_t, \theta_{t+1}^k, \ldots, \theta_{T-1}^k\right].
            \end{align}

    \item[(b)] At period $0$, update $c_{0}^{k-1}$ to be $c_{0}^{k}$ such that
            \begin{align*}
            & E_0\left[\sum_{j=0}^{T-1} u_{j+1}(s_{j+1}, s_j, c_j)\middle | c_0^k, \theta_1^{k}, \ldots, \theta_{T-1}^k \right]\notag\\
            \geq{} & E_0\left[\sum_{j=0}^{T-1} u_{j+1}(s_{j+1}, s_j, c_j)\middle | c_0^{k-1}, \theta_1^{k}, \ldots, \theta_{T-1}^k \right].
            \end{align*}
Such $c^{k}_0$ can be set as a suboptimal (optimal) solution to the problem
            \begin{align}\label{eq:opt_0}
                    &\max_{c_0\in \Theta_0}
                   E_0\left[\sum_{j=0}^{T-1} u_{j+1}(s_{j+1},s_j, c_j)\middle | c_0, \theta_1^{k}, \ldots, \theta_{T-1}^{k} \right],
            \end{align}
where $\Theta_0=\{c\in\mathbb{R}^{n_c}\mid (c,\theta_1^{k}, \ldots, \theta_{T-1}^{k})\in\Theta \}$.
\end{enumerate}
\end{algorithm}

Two remarks are in order: (i) In the EM-C algorithm, when we update $\theta_t^{k-1}$ to $\theta_t^k$ or update $c_0^{k-1}$ to $c_0^k$, if no improvement of the objective function can be found, we simply set $\theta_t^k=\theta_t^{k-1}$ or set $c_0^{k}=c_0^{k-1}$.
(ii) The EM-C algorithm does not use the dynamic programming principle (i.e., the Bellman equation). In contrast, the ADP algorithms in the literature are based on the Bellman equation. Furthermore,
because the EM-C algorithm does not use the Bellman equation, it
can be applied to the general control problem \eqref{equ:multi_per_obj_gen} where the utility function may not be time-separable. See Appendix \ref{app:CEM-general-control} for details.



The intuition of  the view of the EM algorithm in \citet*{Neal-Hinton-1999} and that of our EM-C algorithm are also related to
the block coordinate descent (BCD) algorithms, in which the coordinates are divided into blocks and only one block of coordinates are updated at each substep of iterations in a cyclic order. However, the details of the algorithms differ significantly:
(i) In essence, the EM-C algorithm attempts to update control policies,  just like the EM algorithm that can be viewed as a generalized BCD searching in the functional space (i.e., space of distribution $q$ in \eqref{equ:EM_obj}) rather than space of real numbers. That is why the proof of convergence of EM-C algorithm is similar to that of the EM algorithm (e.g. as in \cite*{wu1983convergence}).
(ii) BCD methods are used for maximizing deterministic objective functions, but the EM-C algorithm is used for maximizing the expectation of a random utility function (i.e., \eqref{equ:utility_func}), which generally cannot be evaluated analytically. That is why we have to employ simulation and stochastic optimization to implement the EM-C algorithm (see Section \ref{sec:Simulation_Based_Algorithm}).
(iii) The EM-C algorithm is more flexible in the optimization requirement.
Unlike the BCD algorithms, the EM-C algorithm does not require to update the control parameter to be the exact minimizer of the subproblem (\eqref{eq:opt_t_S} or \eqref{eq:opt_0}), nor does it update the control parameter based on the gradient of the objective function, partly because in the problems solvable by the EM-C algorithm typically neither the objective function (i.e., \eqref{equ:utility_func}) nor the gradient of the objective function can be evaluated analytically.
(iv) The convergence of the EM-C algorithm holds under weaker conditions. The convergence of the BCD algorithms is obtained based on various assumptions on the objective function such as that the objective function is convex or is the sum of a smooth function and a convex separable function or satisfies certain separability and regularity conditions;\footnote{\citet{Luo-Tseng-1992} prove the convergence of the coordinate descent (CD) algorithm when the objective function is strictly convex twice continuously differentiable. \citet[][Chap. 2.7]{bertsekas1999nonlinear} shows the convergence of the CD algorithm when
the exact minimizer of each subproblem is unique and is used to update a block of coordinates. \citet{Tseung-2001} studies the convergence properties of a block CD method when the objective function has certain separability and regularity properties and when the exact minimizer of each subproblem is used to update a block of coordinates.
\citet{Wright-2015} discusses the convergence of CD algorithms when the objective function is convex and when the coordinates are updated based on the gradient of the objective function.} in contrast, the proof of convergence of EM-C algorithm is similar to that of the EM algorithm, as in \cite*{wu1983convergence}, which does not need such assumptions on the objective function.  See Section \ref{sec:Convergence-Analysis} for details.

\section{Convergence Analysis}\label{sec:Convergence-Analysis}

The convergence properties of EM-C algorithm are similar to those of the EM algorithm.
First, the EM-C algorithm has monotonicity in each iteration. Second, under mild assumptions, the sequence of objective function values generated by the iteration of EM-C algorithm converges to a stationary value (i.e., objective function value evaluated at a stationary point) or a local maximum value. Third, the sequence of control parameters generated by the iteration of EM-C algorithm converges to a stationary point or a local maximum point under some additional regularity conditions.

\subsection{Monotonicity}

\begin{thm}\label{thm:monoto}
  The objective function $U(\cdot)$ defined in \eqref{equ:utility_func} monotonically increases in each iteration of the EM-C algorithm, i.e.,
  \be\label{equ:monoto}
  U(x^k)=U(c^k_0, \theta^k_1, \theta^k_2, \ldots, \theta^k_{T-1})\geq U(x^{k-1})=U(c^{k-1}_0, \theta^{k-1}_1, \theta^{k-1}_2, \ldots, \theta^{k-1}_{T-1}), \forall k.
  \ee
\end{thm}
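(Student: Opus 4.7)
The plan is to show that a single iteration of the EM-C algorithm can be decomposed into a sequence of $T$ substeps, each of which changes exactly one block of control parameters, and each of which is constructed to leave the objective $U$ non-decreasing. Chaining the resulting $T$ inequalities then yields \eqref{equ:monoto}.

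First I would introduce notation for the intermediate parameter vectors produced during the backward sweep of iteration $k$. For $t = T-1, T-2, \ldots, 1$, let
\begin{equation*}
x^{k-1,t} := (c_0^{k-1}, \theta_1^{k-1}, \ldots, \theta_{t-1}^{k-1}, \theta_t^{k}, \theta_{t+1}^{k}, \ldots, \theta_{T-1}^{k}),
\end{equation*}
and set $x^{k-1,T} := x^{k-1}$ as well as $x^{k-1,0} := (c_0^{k-1}, \theta_1^{k}, \ldots, \theta_{T-1}^{k})$, so that updating $c_0^{k-1}$ to $c_0^k$ yields $x^{k-1,0} \mapsto x^k$. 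The substeps of the $k$th iteration are then exactly the transitions $x^{k-1,t+1} \to x^{k-1,t}$ for $t = T-1, \ldots, 1$, followed by $x^{k-1,0} \to x^k$.

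Next I would verify non-decrease at every substep. For the update of $\theta_t$ with $t \geq 1$, the construction \eqref{eq:monot} is precisely the statement $U(x^{k-1,t}) \geq U(x^{k-1,t+1})$; the fact that this is equivalent to the reduced-sum inequality \eqref{equ:new_2} (so that $\theta_t^k$ can be chosen as a suboptimal solution of \eqref{eq:opt_t_S}) is justified in Appendix \ref{app:simple_deriv}, which I would invoke as a lemma. The intuition I would make explicit is that, pathwise, the terms $u_{j+1}(s_{j+1}, s_j, c_j)$ with $j < t$ depend only on $s_0, z_1, \ldots, z_{j+1}$ and on $(c_0, \theta_1, \ldots, \theta_j)$ via \eqref{eq:c_t_s_t} and \eqref{eq:state_evo}, and therefore are unaffected when only $\theta_t$ is changed. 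Hence the expectations of those terms cancel out of both sides of \eqref{eq:monot}, leaving \eqref{equ:new_2}. For the terminal substep at period $0$, the defining property \eqref{eq:mono_0} of $c_0^k$ gives $U(x^k) \geq U(x^{k-1,0})$ directly.

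Finally I would chain the inequalities
\begin{equation*}
U(x^{k-1}) = U(x^{k-1,T}) \leq U(x^{k-1,T-1}) \leq \cdots \leq U(x^{k-1,1}) \leq U(x^{k-1,0}) \leq U(x^k),
\end{equation*}
which is \eqref{equ:monoto}. I would also record the trivial case: if at any substep no improving $\theta_t$ (resp.\ $c_0$) can be found, the algorithm keeps the previous value, so the corresponding inequality is an equality and the chain is unaffected. The only substantive step is the equivalence \eqref{eq:monot} $\Leftrightarrow$ \eqref{equ:new_2}, but this is a direct consequence of the forward-in-time structure of \eqref{eq:c_t_s_t}--\eqref{eq:state_evo} and is already handled in Appendix \ref{app:simple_deriv}, so the proof is essentially a bookkeeping argument.
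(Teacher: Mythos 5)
Your proposal is correct and matches the paper's own proof in Appendix \ref{app:proof_monoto}: both decompose the $k$th iteration into $T$ substeps, observe that each substep satisfies \eqref{eq:monot} or \eqref{eq:mono_0} by construction, and chain the resulting inequalities. The additional remarks you make (the equivalence with \eqref{equ:new_2} via Appendix \ref{app:simple_deriv}, and the no-improvement case giving equality) are consistent with, though not needed beyond, what the paper records.
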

\begin{proof}
See Appendix \ref{app:proof_monoto}.
\end{proof}

\subsection{Convergence of $\{U(x^k)\}_{k\geq 0}$ to a Stationary Value or a Local Maximum Value}

Let $\{x^k\}_{k\geq 0}$ be the sequence of control parameters generated by the EM-C algorithm. In this subsection, we consider the issue of the convergence of $U(x^k)$ to a stationary value or a local maximum value. We make the following mild assumptions on the objective function $U(\cdot)$ defined in \eqref{equ:utility_func}:
{\allowdisplaybreaks
\begin{align}
  &\text{For any }x^0\ \text{such that}\ U(x^0)>-\infty, \{x\in\Theta\mid U(x)\geq U(x^0)\}\ \text{is compact.} \label{equ:assump_1}\\
  &U(\cdot)\ \text{is continuous in}\ \Theta\ \text{and differentiable in the interior of}\ \Theta.\label{equ:assump_2}
\end{align}}%
The assumption \eqref{equ:assump_2} is needed as we need to define stationary points of $U(\cdot)$. Suppose the objective function $U(\cdot)$ satisfies \eqref{equ:assump_1} and \eqref{equ:assump_2}. Then, we have
\be\label{equ:bounded_above}
\{U(x^k)\}_{k\geq 0}\ \text{is bounded above for any }x^0\ \text{such that}\ U(x^0)>-\infty.
\ee
By \eqref{equ:monoto} and \eqref{equ:bounded_above}, $U(x^k)$ converges monotonically to some $U^*$. It is not guaranteed that $U^*$ is the global maximum of $U$ on $\Theta$. In general, if the objective function $U$ has several local maxima and stationary points, which type of points the sequence generated by the EM-C algorithm converges to depends on the choice of the starting point $x^0$; this is also true in the case of the EM algorithm.

A map $\rho$ from points of $X$ to subsets of $X$ is called a point-to-set map on $X$ (\cite*{wu1983convergence}). Let $M$ be the point-to-set map of the EM-C algorithm defined in \eqref{equ:def_pts_map}. Define
{\allowdisplaybreaks
\begin{align}
  \mathcal{M}&:=\text{set of local maxima of }U(\cdot)\ \text{in}\ \Theta,\notag\\
  \mathcal{S}&:=\text{set of stationary points of }U(\cdot)\ \text{in}\ \Theta,\notag\\
  \mathcal{M}(a)& :=\{x\in \M \mid U(x)=a\},\label{equ:M_inverse}\\
  \mathcal{S}(a)&:=\{x\in \cS \mid U(x)=a\}.\label{equ:S_inverse}
\end{align}}%
We have the following theorem on the convergence of $\{U(x^k)\}_{k\geq 0}$ for the EM-C algorithm.

\begin{thm}\label{thm:convergence_red_em}
Suppose the objective function $U$ satisfies conditions \eqref{equ:assump_1} and \eqref{equ:assump_2}. Let $\{x^k\}_{k\geq 0}$ be the sequence generated by $x^k\in M(x^{k-1})$ in the EM-C algorithm.

(1) Suppose that
\be\label{equ:cond_converg}
U(x^k)>U(x^{k-1})\ \text{for any}\ x^{k-1}\notin \mathcal{S} ({\it \text{resp.}\ x^{k-1}\notin \mathcal{M}}).
\ee
Then, all the limit points of $\{x^k\}_{k\geq 0}$ are stationary points (resp. local maxima) of $U$, and $U(x^k)$ converges monotonically to $U^*=U(x^*)$ for some $x^*\in \mathcal{S}$ (resp. $x^*\in\mathcal{M}$).

(2) Suppose that at each iteration $k$ in the EM-C algorithm and for all $t$,
$\theta_t^{k}$ and $c_0^{k}$ are the optimal solutions to the problems 
\eqref{eq:opt_t_S} and \eqref{eq:opt_0} respectively. Then, all the limit points of $\{x^k\}$ are stationary points of $U$ and $U(x^k)$ converges monotonically to $U^*=U(x^*)$ for some $x^*\in \mathcal{S}$.
\end{thm}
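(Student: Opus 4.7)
My proof proposal follows the template laid down by Wu (1983) for the convergence of the EM algorithm, which in turn rests on Zangwill's global convergence theorem for point-to-set maps. The common starting point for both (1) and (2) is to combine the monotonicity result of Theorem \ref{thm:monoto} with assumption \eqref{equ:assump_1}: since $U(x^k)\geq U(x^0)$ for every $k$, the whole sequence $\{x^k\}_{k\geq 0}$ lies in the compact level set $\{x\in\Theta\mid U(x)\geq U(x^0)\}$, so it has at least one limit point; moreover $U$ is bounded above on this set, and being monotone nondecreasing $U(x^k)$ converges to some finite $U^*$.

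For part (1), I would apply Zangwill's theorem to the point-to-set map $M$ defined by \eqref{equ:def_pts_map} with solution set $\mathcal{S}$ (respectively $\mathcal{M}$). The three hypotheses of that theorem are: (i) the iterates remain in a compact set, which we have just established; (ii) $U$ is continuous and strictly ascent outside the solution set, which is assumption \eqref{equ:assump_2} together with the hypothesis \eqref{equ:cond_converg}; and (iii) the map $M$ is closed at every point of $\Theta\setminus\mathcal{S}$ (resp.\ $\Theta\setminus\mathcal{M}$). The conclusion of Zangwill's theorem is precisely that every limit point of $\{x^k\}$ lies in the solution set. Letting $x^*$ be any limit point of a convergent subsequence $x^{k_j}\to x^*$, continuity of $U$ gives $U(x^*)=\lim_j U(x^{k_j})=U^*$, so $U(x^k)\to U(x^*)$ with $x^*\in\mathcal{S}$ (resp.\ $\mathcal{M}$), as claimed.

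For part (2), my plan is to reduce to part (1) by verifying that full optimization of each subproblem \eqref{eq:opt_t_S}, \eqref{eq:opt_0} implies the ascent condition \eqref{equ:cond_converg} with respect to $\mathcal{S}$. Suppose $x^{k-1}\notin\mathcal{S}$; then $\nabla U(x^{k-1})\neq 0$, so at least one of the partial gradients $\partial U/\partial c_0$ or $\partial U/\partial\theta_t$ ($t\in\{1,\ldots,T-1\}$) is nonzero at $x^{k-1}$. For that block, a first-order expansion along the gradient direction produces a feasible point at which $U$ is strictly larger than $U(x^{k-1})$; since the EM-C update in that block takes the exact maximizer of $U$ with the remaining blocks fixed, the block update alone strictly increases $U$. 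By the equivalence between \eqref{eq:monot} and \eqref{equ:new_2} (Appendix \ref{app:simple_deriv}), the remaining block updates are nondecreasing, so $U(x^k)>U(x^{k-1})$. Condition \eqref{equ:cond_converg} is therefore satisfied with $\mathcal{S}$, and part (1) delivers the stated conclusion.

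The main obstacle, as in Wu's original argument, is establishing closedness of the composite map $M$ on $\Theta\setminus\mathcal{S}$ (or $\Theta\setminus\mathcal{M}$). Each factor in the composition is an argmax correspondence over a block $\theta_t$ or $c_0$ with the other blocks held fixed, and one would invoke Berge's maximum theorem—continuity of $U$ jointly in all variables, together with continuity of the constraint correspondences $\Theta_t,\Theta_0$ in the remaining coordinates—to conclude that each factor is upper hemicontinuous with closed graph, whence the composition is closed. This is where the regularity hypotheses \eqref{equ:assump_1}–\eqref{equ:assump_2} do the most work; all other steps are standard once closedness is in hand.
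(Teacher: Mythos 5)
Your overall strategy---compactness of the level set from \eqref{equ:assump_1} plus monotonicity, then Zangwill's global convergence theorem applied to the map $M$, then reducing part (2) to part (1) by verifying \eqref{equ:cond_converg}---is exactly the paper's. However, there are two genuine gaps. First, you leave the closedness of $M$ as a sketch via Berge's maximum theorem applied to ``argmax correspondences,'' but in part (1) the map $M$ of \eqref{equ:def_pts_map} is \emph{not} an argmax correspondence: the theorem only assumes the block updates satisfy the improvement inequalities \eqref{eq:monot} and \eqref{eq:mono_0}, so Berge's theorem has nothing to bite on, and even in the exact-maximization setting of part (2) you would still need to verify continuity and compact-valuedness of the constraint correspondences $\Theta_t$, which \eqref{equ:assump_1}--\eqref{equ:assump_2} do not supply. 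The paper avoids all of this by proving closedness directly: take $a^k\to\bar a$, $b^k\in M(a^k)$, $b^k\to\bar b$, write out the finitely many defining inequalities of $b^k\in M(a^k)$, and pass to the limit using only the continuity of $U$. That argument is both simpler and actually covers the general map of part (1); you should replace your Berge sketch with it.

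Second, your verification of \eqref{equ:cond_converg} in part (2) contains a flaw. You pick a block where $\partial U/\partial\theta_t(x^{k-1})\neq 0$ and claim the exact maximization at that block strictly increases $U$ above $U(x^{k-1})$. But the algorithm sweeps backward, so when it reaches block $t$ the blocks $t+1,\dots,T-1$ have already been replaced by $\theta^k_{t+1},\dots,\theta^k_{T-1}$; the subproblem \eqref{eq:opt_t_S} therefore maximizes $U$ over a slice that passes through $(c_0^{k-1},\theta_1^{k-1},\dots,\theta_{t-1}^{k-1},\cdot\,,\theta_{t+1}^{k},\dots,\theta_{T-1}^{k})$, which need not contain $x^{k-1}$, and the nonvanishing of the partial gradient \emph{at $x^{k-1}$} says nothing about that slice. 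The paper's case analysis repairs this: if $x^k=x^{k-1}$ then every block of $x^{k-1}$ maximizes its own slice (all slices then pass through $x^{k-1}$), forcing $\nabla U(x^{k-1})=0$ and contradicting $x^{k-1}\notin\mathcal{S}$; otherwise, letting $i_0$ be the \emph{largest} index at which $x^k_j\neq x^{k-1}_j$, the blocks after $i_0$ are unchanged, so the slice at $i_0$ does pass through $x^{k-1}$ and the update there yields $U(x^k)\geq H_{i_0}(x^k_{i_0})>H_{i_0}(x^{k-1}_{i_0})=U(x^{k-1})$. Your argument needs to be restructured along these lines.
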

\begin{proof}
  See Appendix \ref{app:proof_red_em}.
\end{proof}

\subsection{Convergence of $\{x^k\}_{k\geq 0}$ to a Stationary Point or a Local Maximum Point}
Let $\M(a)$ and $\cS(a)$ be defined in \eqref{equ:M_inverse} and \eqref{equ:S_inverse}  respectively. Under the conditions of Theorem \ref{thm:convergence_red_em}, $U(x^k)\to U^*$ and all the limit points of $\{x^k\}$ are in $\cS(U^*)$ (resp. $\M(U^*)$). However, this does not automatically imply the convergence of $\{x^k\}_{k\geq 0}$ to a point $x^*$. However, if $\cS(U^*)$ (resp. $\M(U^*)$) consists of a single point $x^*$, i.e., there cannot be two different stationary points (resp. local maxima) with the same $U^*$, then the following theorem says that $x^k\to x^*$. The following theorem also provides another condition under which $x^k\to x^*$.

\begin{thm}\label{thm:converg_x}
Let $\{x^k\}_{k\geq 0}$ be an instance of an EM-C algorithm satisfying the conditions of Theorem \ref{thm:convergence_red_em}, and let $U^*$ be the limit of $\{U(x^k)\}_{k\geq 0}$.

(1) If $\cS(U^*)=\{x^*\}$ (resp. $\M(U^*)=\{x^*\}$), then $x^k\to x^*$ as $k\to \infty$.

(2) If $\|x^{k+1}-x^k\|\to 0$ as $k\to\infty$, then, all the limit points of $x^k$ are in a connected and compact subset of $\cS(U^*)$ (resp. $\M(U^*)$).
In particular, if $\cS(U^*)$ (resp. $\M(U^*)$) is discrete, i.e., its only connected components are singletons, then $x^k$ converges to some $x^*$ in $\cS(U^*)$ (resp. $\M(U^*)$).
\end{thm}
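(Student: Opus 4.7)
The plan is to treat both parts as consequences of the facts already established in Theorem \ref{thm:convergence_red_em}: namely, that $\{x^k\}_{k\ge 0}$ lies in the compact sublevel set $\{x\in\Theta\mid U(x)\ge U(x^0)\}$ (by assumption \eqref{equ:assump_1} together with monotonicity), and that every limit point of $\{x^k\}$ belongs to $\cS(U^*)$ (resp.\ $\M(U^*)$). Throughout, I will write the argument for $\cS(U^*)$; the argument for $\M(U^*)$ is identical.

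For Part (1), suppose to the contrary that $x^k\not\to x^*$. Then there exist $\varepsilon>0$ and a subsequence $\{x^{k_j}\}$ with $\|x^{k_j}-x^*\|\ge\varepsilon$ for all $j$. Because $\{x^k\}$ is contained in a compact set, $\{x^{k_j}\}$ has a further subsequence converging to some $\tilde x$ with $\|\tilde x-x^*\|\ge\varepsilon$. Then $\tilde x$ is a limit point of $\{x^k\}$, so by Theorem \ref{thm:convergence_red_em} it lies in $\cS(U^*)$. But $\cS(U^*)=\{x^*\}$ by hypothesis, giving $\tilde x=x^*$, a contradiction. This yields $x^k\to x^*$.

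For Part (2), let $L$ denote the set of limit points of $\{x^k\}$. Compactness of $L$ is immediate: $L$ is a closed subset of the compact sublevel set above. To obtain connectedness, I will argue by contradiction. Suppose $L=A\cup B$ with $A,B$ nonempty, closed, and disjoint. Then $A,B$ are compact and $d(A,B)>0$; set $\delta:=d(A,B)/3$ and let $N_\delta(A), N_\delta(B)$ be their open $\delta$-neighborhoods. Since $A$ and $B$ each contain limit points, infinitely many $x^k$ lie in $N_\delta(A)$ and infinitely many lie in $N_\delta(B)$. The hypothesis $\|x^{k+1}-x^k\|\to 0$ ensures that, for all sufficiently large $k$, the one-step displacement is smaller than $\delta$, so whenever the trajectory passes from $N_\delta(A)$ to $N_\delta(B)$ it must take at least one intermediate step that lies outside $N_\delta(A)\cup N_\delta(B)$. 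Thus there is a subsequence $\{x^{k_j}\}$ with $d(x^{k_j},A)\ge\delta$ and $d(x^{k_j},B)\ge\delta$ for all $j$, and by compactness this subsequence has a cluster point $\tilde x\in L$ satisfying $d(\tilde x,A)\ge\delta$ and $d(\tilde x,B)\ge\delta$, contradicting $L=A\cup B$. Hence $L$ is connected, and the main theorem of the part follows because $L\subseteq\cS(U^*)$. The final assertion is then immediate: if $\cS(U^*)$ is discrete, the connected set $L\subseteq\cS(U^*)$ must reduce to a single point $x^*$, so every convergent subsequence has the same limit $x^*$ and therefore $x^k\to x^*$.

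The main obstacle is the connectedness step in Part (2); the rest is a routine compactness argument. The standard device is precisely the ``no large jumps'' consequence of $\|x^{k+1}-x^k\|\to 0$, which forces the existence of transitional iterates outside $N_\delta(A)\cup N_\delta(B)$ whose accumulation point would have to lie in $L$ but outside $A\cup B$. Care is needed only in choosing $\delta$ small enough to make the neighborhoods disjoint and in invoking the compactness of the ambient sublevel set to extract the transitional cluster point.
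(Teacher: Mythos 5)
Your proof is correct. Part (1) coincides with the paper's argument: both reduce to the fact, established in Theorem \ref{thm:convergence_red_em}, that every limit point of the bounded sequence $\{x^k\}$ lies in $\cS(U^*)$ (resp.\ $\M(U^*)$), which is a singleton by hypothesis. The only divergence is in Part (2): the paper obtains compactness and connectedness of the limit set by citing Theorem 28.1 of Ostrowski (1966) for bounded sequences with vanishing successive displacements, whereas you prove that fact from scratch via the separation argument with the $\delta$-neighborhoods $N_\delta(A)$ and $N_\delta(B)$ and the transitional iterates forced to lie outside both. Your choice $\delta = d(A,B)/3$ does make the two neighborhoods disjoint and, combined with $\|x^{k+1}-x^k\|<\delta$ for large $k$, correctly rules out a single step jumping from $N_\delta(A)$ into $N_\delta(B)$, so the transitional cluster point you extract genuinely violates $L=A\cup B$. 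In effect you have reproved the cited Ostrowski result; what this buys is a self-contained argument at the cost of a page of elementary topology, while the paper's citation keeps the proof short but leaves the reader to consult the reference. The concluding step for discrete $\cS(U^*)$ (a connected subset of a discrete set is a singleton, hence the bounded sequence converges) matches the paper's intent and is stated more explicitly in your version.
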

\begin{proof}
  See Appendix \ref{app:proof_cong_x}.
\end{proof}

Of course, from a practical viewpoint, very often the convergence of value function $\{U(x^k)\}_{k\geq 0}$ to a stationary value or a local maximum value is more important than the convergence of $\{x^k\}_{k\geq 0}$.

\section{An Implementation of the EM-C Algorithm}\label{sec:Simulation_Based_Algorithm}

%
%
%
%

\subsection{Implementing the EM-C Algorithm by Simulation}

In the EM-C algorithm, we need to find a suboptimal (optimal) solution to the problems 
\eqref{eq:opt_t_S} and \eqref{eq:opt_0}.
In practice, the expectation in the objective functions of these problems may not be evaluated in closed-form, which makes solving these problems difficult.
We propose to solve these problems by using a simulation based approach, called stochastic approximation (SA) algorithm.

%

The SA is a classical iterative stochastic optimization
algorithm that tries to find zeros or extrema of expectations which
cannot be computed directly.\footnote{The SA algorithm is initiated in \cite*{Robbins1951} and \citet*{Kiefer-Wolfowitz-1952}.
It has been widely used in reinforcement learning to improve policies in temporal difference methods (see, e.g., \cite*{chang2007simulation}). There is a voluminous literature on SA algorithms; see, e.g, \cite*{Gu1998},  a survey paper by \citet*{Lai-2003} and the books by \citet*{Kushner-Yin-2003} and \citet*{Spall-2003}. \citet*{broadie2011general} propose a SA algorithm that improves the finite time performance of the Kiefer-Wolfowitz algorithm.}
 More precisely, at each iteration of the EM-C algorithm, sample paths are
simulated using the current policy, and then the SA is applied to find updates of the control policy at each time period to improve the objective function.\footnote{However, it is not necessary to use the SA algorithm to implement our EM-C algorithm. One can also use other stochastic optimization algorithms such as the
cross-entropy algorithm (\citet*{rubinstein2004cross}) in the implementation.}

At the beginning of the $k$th iteration, we first simulate $N$ i.i.d. sample paths of the states $(s_0, s_1, \ldots, s_{T-1})$ according to the control parameter $(c^{k-1}_0, \theta_1^{k-1}, \ldots, \theta_{T-1}^{k-1})$, which are obtained at the end of the $(k-1)$th iteration. We denote these sample paths as
\be
(s_0, s_{1,l}^{k}, s_{2,l}^{k}, \ldots, s_{T-1,l}^{k}), l=1, \ldots, N.\notag
\ee

In step 2(a) of Algorithm \ref{alg:Main_Algorithm}, we apply the SA algorithm to solve the problem \eqref{eq:opt_t_S}. The expectation in the objective function of \eqref{eq:opt_t_S} is equal to
{\allowdisplaybreaks
\begin{align}\label{equ:simu_E_t}
&E_0\left[\sum_{j=t}^{T-1} u_{j+1}(s_{j+1}, s_j, c_j)\middle | c_0^{k-1}, \theta_1^{k-1}, \ldots, \theta_{t-1}^{k-1}, \theta_t, \theta_{t+1}^k, \ldots, \theta_{T-1}^k\right]\notag\\
={} & E_0\left\{\frac{1}{N}\sum_{l=1}^N \left[u_{t+1}(s^k_{t+1,l}(\theta_t), s^k_{t,l}, c^k_{t,l}(\theta_t))\phantom{\sum_{j=t+1}^{T-1}}\right.\right.\notag\\
&\quad \quad \quad \quad \quad \quad  + \left.\left.\sum_{j=t+1}^{T-1} u_{j+1}(s^k_{j+1,l}(\theta_t), s^k_{j,l}(\theta_t), c^k_{j,l}(\theta_t))\right]\right\},
\end{align}}%
where $c^k_{t,l}(\theta_t)=c(t, s^k_{t,l}, \theta_{t})$ (see \eqref{equ:control}) and
$$(s^k_{t+1,l}(\theta_{t}), c^k_{t+1,l}(\theta_t), \ldots, s^k_{T-1,l}(\theta_{t}), c^k_{T-1,l}(\theta_t), s^k_{T,l}(\theta_{t}))$$
is a simulated sample path that starts from $s^k_{t, l}$ and then follows the control parameter $\theta_{t}, \theta_{t+1}^k, \ldots, \theta_{T-1}^k$. The SA algorithm uses
{\allowdisplaybreaks
\begin{align}\label{equ:appro_E_t}
\tilde f(\theta_{t}):=& \frac{1}{N}\sum_{l=1}^N \left[u_{t+1}(s^k_{t+1,l}(\theta_t), s^k_{t,l}, c^k_{t,l}(\theta_t))\phantom{\sum_{j=t+1}^{T-1}}\right.\notag\\
&\quad \quad \quad \quad \quad \quad  + \left.\sum_{j=t+1}^{T-1} u_{j+1}(s^k_{j+1,l}(\theta_t), s^k_{j,l}(\theta_t), c^k_{j,l}(\theta_t))\right]
\end{align}}%
as an approximation to the objective function when solving the problem. Hence, at each iteration of the SA algorithm (at the parameter $\theta_{t}$ corresponding to that iteration), we only need to simulate $N$ sample paths of the states during period $t+1$ to period $T$, i.e., $(s^k_{t+1,l}(\theta_{t}), \ldots, s^k_{T-1,l}(\theta_{t}), s^k_{T,l}(\theta_{t}))$, $l=1, \ldots, N$.
The samples $s^k_{t,l}$, $l=1, \ldots, N$ are the same for all iterations of the SA algorithm.


In step 2(b)  of Algorithm \ref{alg:Main_Algorithm}, we apply the SA algorithm to solve the problem \eqref{eq:opt_0}. The expectation in \eqref{eq:opt_0} is equal to
{\allowdisplaybreaks
            \begin{align}\label{eq:simu_E_0}
                  & E_0\left[\sum_{j=0}^{T-1} u_{j+1}(s_{j+1}, s_j, c_j)\middle | c_0, \theta_1^{k}, \ldots, \theta_{T-1}^k \right]\notag\\
                  ={} & E_0\left\{\frac{1}{N}\sum_{l=1}^N \left[u_{1}(s^k_{1,l}(c_0), s_0, c_0) +\sum_{j=1}^{T-1} u_{j+1}(s^k_{j+1,l}(c_0), s^k_{j,l}(c_0), c^k_{j,l}(c_0))\right]\right\},
            \end{align}}%
where $\{(s^k_{1,l}(c_0), c^k_{1,l}(c_0), \ldots, s^k_{T-1,l}(c_0), c^k_{T-1,l}(c_0), s^k_{T,l}(c_0))\}_{l=1}^N$ are $N$ i.i.d. sample paths of $(s_{1}, c_{1}, \ldots, s_{T-1}, c_{T-1}, s_T)$ that are simulated starting from $s_0$ and then following the control parameters $(c_0, \theta_1^{k}, \ldots, \theta_{T-1}^k)$. The SA algorithm uses
\begin{align}\label{equ:appro_E_0}
\tilde f(c_0):=\frac{1}{N}\sum_{l=1}^N \left[u_{1}(s^k_{1,l}(c_0), s_0, c_0) +\sum_{j=1}^{T-1} u_{j+1}(s^k_{j+1,l}(c_0), s^k_{j,l}(c_0), c^k_{j,l}(c_0))\right]
\end{align}
as an approximation to the objective function when solving the problem \eqref{eq:opt_0}.
The details of the SA algorithm for solving the problems 
\eqref{eq:opt_t_S} and \eqref{eq:opt_0} are described in Appendix \ref{sec:Stochastic-Approximation}.

Suppose that we use a fixed number of $m$ iterations in the SA algorithm. Then, the computational cost of solving the problems 
\eqref{eq:opt_t_S} and \eqref{eq:opt_0} are respectively 
$O(m(T-t+1))$ and $O(mT)$. Hence, the computational cost of each iteration of the EM-C algorithm is $O(mT^2)$.

\subsection{Numerical Example: A Simple Stochastic Growth Model}



We consider a simple stochastic growth problem as follows
{\allowdisplaybreaks
\begin{align}
\max_{c_t}\ \  & E_0\left[\sum_{t=0}^{2}u_{t+1}(s_{t+1}, s_t, c_{t})\right]=E_0\left[\sum_{t=0}^{2}\log c_{t}+\log s_3 \right]\label{equ:simple_stoc_grow} \\
\text{s.t.}\ \  & s_{t+1} = \left(s_{t}-\frac{s_t}{1+\exp(c_{t})}\right)\exp(a+bz_{t+1}), t=0, 1, 2,\nonumber\\
  & s_{0}=1,\nonumber \\
  & c_{t}\in \R, t=0, 1, 2,\nonumber
\end{align}}%
where $a$ is a constant, $b>0$ is the volatility, and $z_{t+1}$, $t=0, 1, 2$,
are i.i.d. random noises with the standard normal distribution. At the $t$th time period, the amount $\frac{s_t}{1+\exp(c_{t})}$ is consumed from capital $s_{t}$,
and the remaining capital grows by a multiplication factor $\exp(a+bz_{t+1})$.
All available capital will be consumed in the end (at period $t=3$).

The problem can be solve analytically with the following optimal controls
and optimal value functions
{\allowdisplaybreaks
\begin{align}
& c_{t}^{*} = \log(3-t),\; t=0, 1, 2,\label{equ:opt_policy}\\
& V_{0}(s_{0}) =6a-4\log4+4\log s_{0}.\label{eq:opt_simple_stoch_growth}
\end{align}}%

To test our algorithm numerically, we choose $a=-0.1$ and $b=0.2$.
We use $N=10,000$ sample paths in the simulation and $m=2,000$ iterations in the
SA algorithm. We consider two specifications of basis functions. In the first specification, we use only one basis function
{\allowdisplaybreaks
\begin{align*}
&\phi_{1}(s)=s; c_t =\theta_{t, 1}\phi_1(s_t).
\end{align*}}%
In the second specification, we use two basis functions
{\allowdisplaybreaks
\begin{align*}
& \phi_{1}(s)=1, \phi_{2}(s)=s; c_t =\theta_{t,1}\phi_1(s_t)+\theta_{t,2}\phi_2(s_t).
\end{align*}}%
It follows from \eqref{equ:opt_policy} that the theoretical optimal policy $c_t^*$ lies in the space linearly spanned by the basis in the second specification (corresponding to optimal control parameters $\theta_t^*=(\log(3-t), 0)'$) but not in the first one. In the EM-C algorithm, we choose initial values of $c_0$ and $\theta_t$ to be $c^0_0=0, \theta^0_t = 0, \forall t.$

Figure \ref{fig:stoch_growth_obj_iter}
shows the objective function values of the EM-C algorithm over 5 iterations for the problem \eqref{equ:simple_stoc_grow} by using two specifications of basis functions. In both specifications, the EM-C algorithm converges quickly to a value close to the theoretical optimal objective function value given by \eqref{eq:opt_simple_stoch_growth} after 2 iterations, even in the first specification when only one basis function is used. Each iteration takes around 3 minutes.



\begin{figure}[htbp]
\centering
\includegraphics[width=\textwidth]{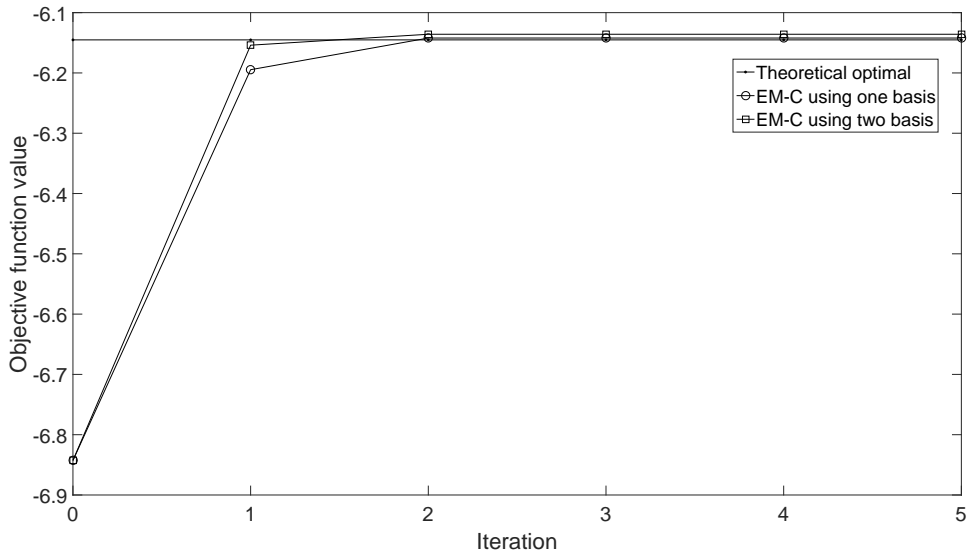}
\caption{The objective function values of the EM-C algorithm over 5 iterations for the problem \eqref{equ:simple_stoc_grow}. In the implementation, we use $N=10,000$ sample paths in the simulation and use $m=2,000$ iterations in the SA algorithm. The EM-C algorithm converges after 2 iterations. Each iteration takes around 3 minutes. The theoretical optimal objective function value is -6.1452. The optimal objective function values obtained by the EM-C algorithm is -6.1421 (7.4659e-03) when only one basis function is used and is -6.1358 (7.4755e-03) when two basis functions are used. The numbers in the parenthesis denote standard errors of the estimate of the objective function using $N$ sample paths, which is equal to the sample standard deviation of the $N$ samples on the right-hand side of \eqref{equ:appro_E_0} divided by $\sqrt{N}$.} \label{fig:stoch_growth_obj_iter}
\end{figure}

\section{Application 1: Monopoly Pricing of Perishable Products}\label{sec:dynamic_pricing}
In this section we shall apply the EM-C algorithm to solve two problems related to monopoly pricing of airline tickets. The first one, the single product airline ticket pricing, is more for the purpose of illustrating the validity of the algorithm, as there is an analytical solution available for the continuous time version of the problem and a good heuristic plug-in method for the discrete version of the problem. The second one, the multi-product airline ticket pricing, is challenging, as so far only heuristic methods are available. The EM-C algorithm not only provides a rigorous solution, but also yields significant value function improvement over the heuristic methods.

\subsection{Single Product Case}

\subsubsection{The Single Product Monopoly Pricing Model}
Consider a single product monopoly pricing for airline tickets as in \cite*{gallego1994optimal}. It is a finite horizon
problem with one state and one control.
Suppose revenue within a short period $(t, t+\Delta t)$ is given by
$p(\lambda_t)\Delta N^{\lambda}$,
where $\lambda_t$ is the sale intensity at time $t$, $N^{\lambda}$ is a Poisson counting process with intensity $\lambda_t$, $p(\lambda_t)$ is the price at time $t$, and $\Delta N^{\lambda}$
is the number of arriving customers in the time interval $(t, t+\Delta t)$. The continuous-time problem is
formulated as follows
{\allowdisplaybreaks
\begin{align}
V(n^c, T) = \sup_{\lambda_s}\ \ & E_0\left[\int_{0}^{T}p(\lambda_{s})dN_{s}^{\lambda}\right]\label{equ:prob_dyna_pricing_single_continuous}\\
\text{s.t.}\ \ & N_{T}^{\lambda}\leq n^{c},\nonumber\\
& p(\lambda_{s})=-\frac{1}{\alpha}\log\frac{\lambda_{s}}{a},\mbox{ for }s\leq T,\nonumber
\end{align}}%
where $n^{c}$ is the total remaining capacity and $T$ is the time-to-maturity.

In this problem, the state variable is the residual capacity $R{}_{s}=n^{c}-N_{s}^{\lambda}$
and the control is $\lambda_{s}$, which determines the ticket price
$p(\lambda_{s})$ and the dynamics of future arrivals.
Apparently, $V(n^c, 0)=V(0, T)=0$, for any $n^c$ and any $T$. When $\alpha=1$, luckily enough there is an analytical solution given by (\cite*{gallego1994optimal})
{\allowdisplaybreaks
\begin{align}
& V(n^c, t) = \log\left(\sum_{k=0}^{n^c}\frac{(aT/e)^{k}}{k!}\right),\ \text{for any}\ n^c\in\mathbb{N}^+, t>0,\label{equ:opt_revenue}\\
& p^{*}_t = p(\lambda^*_t)  = V(R_t, T-t) - V(R_t - 1,T-t)+1,\ \text{for}\ R_t\geq 1, 0\leq t \leq T.\label{eq:inv_true_opt_c}
\end{align}}%

We discretize the time horizon $[0, T]$ into $n_{T}$ equal periods,
denoted as $t_{0}=0, \ldots, t_{n_{T}}=T$, and
formulate a discrete version of the problem \eqref{equ:prob_dyna_pricing_single_continuous} as follows:
{\allowdisplaybreaks
\begin{align}\label{equ:prob_dyna_pricing_single_prod}
\max_{c_{t_i},i=0, 1, \ldots, n_{T}-1}\ \ & E_0\left[\sum_{i=0}^{n_{T}-1}p(\lambda_{t_{i}})(N_{t_{i+1}}^{c}-N_{t_{i}}^{c})\right]\\
\text{s.t.}\ \ \ \ \ \ \ \ & N_{t_{i+1}}^{\lambda}-N_{t_{i}}^{\lambda}\overset{d}{\sim}\mbox{Poisson}(\lambda_{t_{i}}T/n_T), i=0, 1, \ldots, n_{T}-1,\label{equ:pois_dist}\\
& N_{t_{i+1}}^{c}-N_{t_{i}}^{c}=\min(n^{c}-N_{t_{i}}^{c},N_{t_{i+1}}^{\lambda}-N_{t_{i}}^{\lambda}), i=0, 1, \ldots, n_{T}-1,\label{equ:N_c}\\
  & p(\lambda_{t_{i}})=-\frac{1}{\alpha}\log\frac{\lambda_{t_{i}}}{a}, i=0, 1, \ldots, n_{T}-1,\nonumber\\
  & \lambda_{t_{i}} = \frac{a}{1+\exp(c_{t_i})}, i=0, 1, \ldots, n_{T}-1,\label{equ:logit}\\
  & c_{t_{i}} \in\mathbb{R}, i=0, 1, \ldots, n_{T}-1,\nonumber
\end{align}}%
where \eqref{equ:pois_dist} means that $N_{t_{i+1}}^{\lambda}-N_{t_{i}}^{\lambda}$ has a Poisson distribution with mean $\lambda_{t_{i}}T/n_T$;
$N_{t}^{c}$ is the total number of customers that have arrived and bought the ticket during $[0, t]$; \eqref{equ:N_c} means that $N_{t}^{c}$ is capped at $n^c$;
\eqref{equ:logit} is used to incorporate the constraint $\lambda_{t_i}\in (0, a)$.
 In the discrete problem \eqref{equ:prob_dyna_pricing_single_prod}, the state variable is the residual capacity $R{}_{t_{i}}=n^{c}-N_{t_{i}}^{c}$.

There is no analytical solution to the discrete problem \eqref{equ:prob_dyna_pricing_single_prod}; but when $\alpha=1$, the optimal policy \eqref{eq:inv_true_opt_c} for the continuous problem can be used as a plug-in policy for the discrete problem.

\subsubsection{Numerical Results}

In the following numerical examples of problem \eqref{equ:prob_dyna_pricing_single_prod}, we choose $a=20$,
$\alpha=1$, $T=1$, $n_T=4$, and $n^{c}=20, 10, \text{and}\ 5$, respectively.
We use $N=10,000$
sample paths in the simulation and use $m=1,000$ iteration in the SA algorithm. We specifies the control $c_t$ as the linear combination of three basis functions:
$$\phi_{i}(R):=R^{i}, i = 0, 1, 2;\ c_t =\theta_{t,1} \phi_1(R_t) + \theta_{t,2} \phi_2(R_t) + \theta_{t,3} \phi_3(R_t).$$
In the algorithm, we choose initial values of $c_0$ and $\theta_t$ to be $c^0_0=0, \theta^0_t = 0$, for all $t$.

\begin{table}[htbp]
\begin{centering}
\resizebox{\linewidth}{!}{
\begin{tabular}{cccccccccc}
\hline
\hline
\addlinespace
 & \multicolumn{3}{c}{$n^{c}=20$} & \multicolumn{3}{c}{$n^{c}=10$} & \multicolumn{3}{c}{$n^{c}=5$}\tabularnewline
 \addlinespace
 \cline{2-4}\cline{5-7}\cline{8-10}
 \addlinespace
 & continuous & \multicolumn{2}{c}{discrete}  & continuous & \multicolumn{2}{c}{discrete}  & continuous & \multicolumn{2}{c}{discrete}\tabularnewline
 \addlinespace
 \cline{3-4}\cline{6-7}\cline{9-10}
 \addlinespace
 & & plug-in & EM-C  & & plug-in & EM-C  & & plug-in & EM-C\tabularnewline
mean  & 7.3576 & 7.3494 & 7.3777  & 7.2231 & 7.2207  & 7.2237  & 6.000 & 5.8964 & 5.9419\tabularnewline
std. error & N/A & 0.0271 & 0.0270 & N/A & 0.0257  & 0.0260 & N/A & 0.0205  &  0.0204\tabularnewline
\addlinespace
\hline
\hline
\end{tabular}}
\end{centering}

\caption{Monopoly pricing of a single product: expected revenue for the
continuous problem \eqref{equ:prob_dyna_pricing_single_continuous} and the discrete problem \eqref{equ:prob_dyna_pricing_single_prod}
obtained under three policies respectively: (i) ``continuous'' means the expected revenue for the continuous problem under the theoretical optimal policy \eqref{eq:inv_true_opt_c}; (ii) ``plug-in'' means the expected revenue for the discrete problem obtained under
the plug-in policy \eqref{eq:inv_true_opt_c}; (iii) ``EM-C'' means the expected revenue for the discrete problem obtained under
the optimal policy calculated by the EM-C algorithm. The expected revenue under the theoretical optimal policy for the continuous problem is computed from \eqref{equ:opt_revenue}; the expected revenues for the discrete problem under the plug-in and EM-C policies are estimated from $N=10,000$ sample paths. We consider three cases: $n^{c} = 20, 10$, and $5$. ``Std. error''
indicates the standard error of the estimate of the expected revenue, which is equal to the sample standard deviation of the $N$ samples on the right-hand side of \eqref{equ:appro_E_0} divided by $\sqrt{N}$. \label{tab:inv_mean_rev}}
\end{table}

Table \ref{tab:inv_mean_rev} compares the
expected revenue for the
continuous problem \eqref{equ:prob_dyna_pricing_single_continuous} and the discrete problem \eqref{equ:prob_dyna_pricing_single_prod}
obtained under three policies respectively: (i) the expected revenue for the continuous problem under the theoretical optimal policy \eqref{eq:inv_true_opt_c}; (ii) the expected revenue for the discrete problem obtained under
the plug-in policy \eqref{eq:inv_true_opt_c}; (iii) the expected revenue for the discrete problem obtained under
the optimal policy calculated by the EM-C algorithm. 
It seems that the expected revenue of the optimal policy obtained by the EM-C algorithm is slightly better than that of the plug-in policy for the discrete problem. To demonstrate convergence of the EM-C algorithm,
Figure \ref{fig:single_airline_combined}
 shows the objective function values of the EM-C algorithm over 5 iterations for the discrete problem \eqref{equ:prob_dyna_pricing_single_prod} when $n^c = 20, 10, \text{and}\ 5$ respectively.

\begin{figure}[htbp]
\centering
\subfloat[][$n_c=20$]{
\label{fig:single_airline_combined_n_20}
\includegraphics[keepaspectratio=true, width=0.7\textwidth, clip=true]{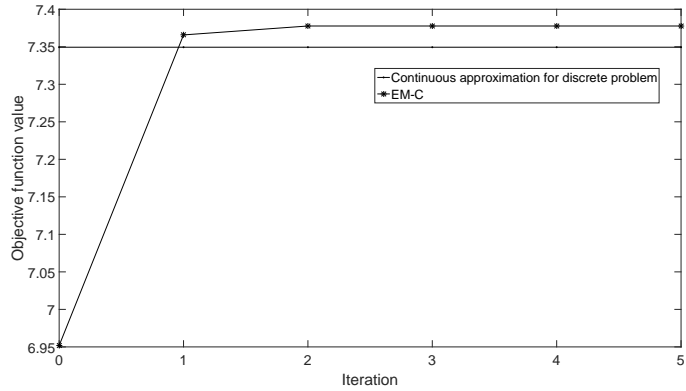}
}
\\
\subfloat[][$n_c=10$]{
\label{fig:single_airline_combined_n_10}
\includegraphics[keepaspectratio=true, width=0.7\textwidth,clip=true]{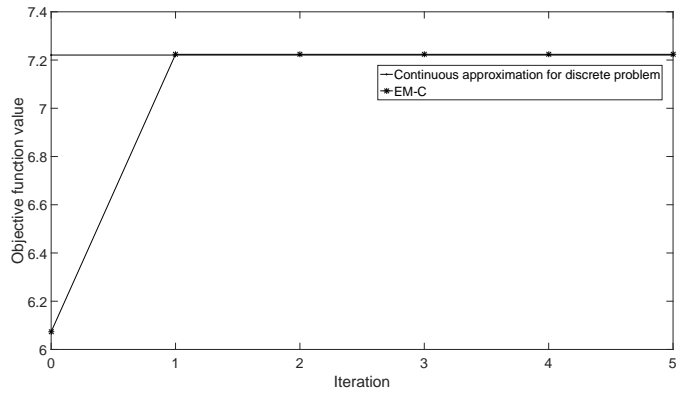}
}
\\
\subfloat[][$n_c=5$]{
\label{fig:single_airline_combined_n_5}
\includegraphics[keepaspectratio=true, width=0.7\textwidth,clip=true]{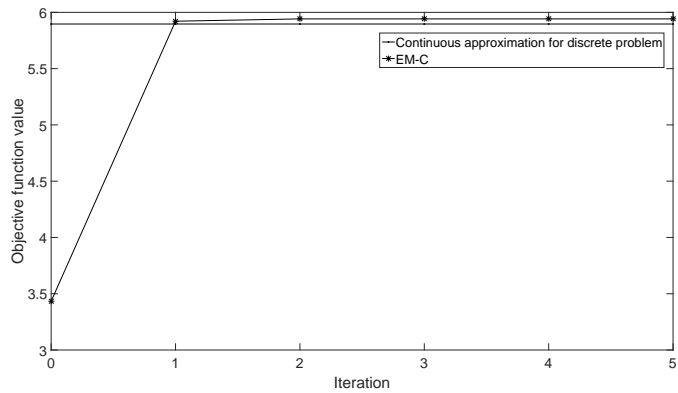}
}
\\
\caption[]{The objective function values (expected revenues) obtained by the EM-C algorithm over 5 iterations for the single product monopoly pricing problem \eqref{equ:prob_dyna_pricing_single_prod}. The EM-C algorithm converges after 2 iterations. Each iteration takes around 3 minutes.
}
\label{fig:single_airline_combined}
\end{figure}

\subsection{Multi-Product Case}

\subsubsection{The Multi-product Monopoly Pricing Model}
We extend the single product monopoly pricing model into a multi-product
model as first studied in \cite*{gallego1997multiproduct}. With higher dimension, this
problem cannot be solved analytically.
%
More precisely, suppose the airline flight network has $n_{l}$ legs (direct flights), based on which
there are $n_{i}$ itineraries. Define a matrix $A:=[a_{kj}]\in\mathbb{R}^{n_{l}\times n_{i}}$, where $a_{kj}\in\{0, 1\}$ and $a_{kj}=1$ if and only if the direct flight $k$ is a part of the itinerary $j$. For example, consider a simple network with 3 nodes, $\{1, 2, 3\}$, two direct flights
$\{1\to2, 2\to3\}$, and three itineraries $\{1\to2, 2\to3, 1\to2\to3\}$. Then for this flight network,
\begin{equation}\label{equ:A}
A= \begin{pmatrix}
1&0&1\\
0&1&1
\end{pmatrix}.
\end{equation}
As one can see that the dimension of this problem increases very quickly, Monte Carlo methods might offer a realistic hope for solving such a problem.

Let $p\in\mathbb{R}^{n_{i}}$ be the vector of prices for the $n_i$ itineraries. The
customers who need the $n_i$ itineraries come to buy tickets according to the process $N^{\lambda}\in\mathbb{N}^{n_{i}}$ with arrival rates $\lambda\in\mathbb{R}^{n_{i}}$. $p$ is assumed to be a function of the customer arrival rates $\lambda$.
Let the initial capacities of the direct flights be $n^{c}\in\mathbb{N}^{n_{l}}$. The objective is to optimize the expected revenue by choosing
the prices $p$, or equivalently, the customer arrival rates $\lambda$. More precisely, the multi-product monopoly pricing problem is formulated as
{\allowdisplaybreaks
\begin{align}
V(n^c, T) = \sup_{\lambda_s}\ \ & E_0\left[\int_{0}^{T}p(\lambda_{s})'dN_{s}^{\lambda}\right]\label{equ:multi_product_problem}\\
\text{s.t.}\ \ & V(n, 0)=V(0, t)=0,\;\forall n\in\mathbb{N}^{n_l}, \forall t>0,\notag\\
& \int_{0}^{T}AdN_{s}^{\lambda}\leq n^{c},\notag\\
& p(\lambda_{s})_j=(\epsilon_{0, j}^{-1}\log\frac{\lambda_{0, j}}{\lambda_{s, j}}+1)p_{0,j},\mbox{ for }s\leq T, j=1,\ldots, n_i.\notag
\end{align}}%

As the high dimensional HJB equation corresponding to the problem \eqref{equ:multi_product_problem} is difficult to solve, \cite*{gallego1997multiproduct}
provide two heuristic policies called MTS and MTO
that are asymptotically optimal as the size of the problem
goes to infinity. Both heuristic policies
use the optimal control from a deterministic version of this problem,
which assumes that the control $\lambda_t$ is time invariant and deterministic.
The deterministic case is solved as a constrained non-linear optimization
problem. Denote the corresponding control and price as $\hat\lambda^{*}$ and $\hat p^{*}$ respectively.
More precisely,  the MTS and MTO policies are given below:
\begin{enumerate}
\item[(i)] MTS policy: set the prices equal to the deterministic optimal price
$\hat p^{*}$ and pre-allocate seats for each itinerary accordingly. Stop
selling the ticket of itinerary $j$ if the pre-allocated seats for itinerary $j$ are exhausted;
\item[(ii)] MTO policy: set the prices equal to the deterministic optimal price
$\hat p^{*}$ and sell tickets in the order of customer
arrival. Stop selling the ticket of itinerary $j$ when the inventory of
at least one direct flight $k$ drops strictly below $a_{kj}$.
\end{enumerate}

We focus on a discrete-time setting of the problem. The time horizon $[0, T]$ is divided
into $n_{T}$ equal periods, denoted as $t_{0}=0 < t_1 < \cdots < t_{N_{T} - 1} < t_{N_{T}}=T$. The discrete time problem is formulated as
{\allowdisplaybreaks
\begin{align}
\max_{c_{t_k,j},k=0,\ldots,n_T-1,j=1,\ldots,n_i}\ \ & E_0\left[\sum_{k=0}^{n_{T}-1}p(\lambda_{t_{k}})'(N_{t_{k+1}}^{c}-N_{t_{k}}^{c})\right]\label{equ:dyna_pric_multi_prod_formulation}\\
\text{s.t.}\ \quad\quad\quad\ \ \  & N_{t_{k+1},j}^{\lambda}-N_{t_{k},j}^{\lambda}\sim Poisson(\lambda_{t_{k},j}T/n_T), j=1, \ldots, n_i, \forall k\nonumber\\
  & N_{t_{k+1}}^{c}=G(n^{c},N_{t_{k}}^{c}, N_{t_{k+1}}^{\lambda}-N_{t_{k}}^{\lambda}), \forall k,\label{equ:G_func}\\
  & p(\lambda_{t_{k}})_j=(\epsilon_{0,j}^{-1}\log\frac{\lambda_{0, j}}{\lambda_{t_{k},j}}+1)p_{0, j}, j=1, \ldots, n_i, \forall k,\nonumber\\
  & \lambda_{t_{k},j}=\min(\lambda_{0,j}e^{\epsilon_{0,j}}, \max(c_{t_k,j}, 0)), j=1, \ldots, n_i, \forall k, \label{equ:c_constraint}\\
  & c_{t_k, j} \in \mathbb{R}, j=1, \ldots, n_i.\nonumber
\end{align}}%
In the formulation, $\lambda_{t_{k},j}$ should satisfy the constraint that $0<\lambda_{t_{k},j}<\lambda_{0,j}e^{\epsilon_{0,j}}$.
The constraint is imposed by \eqref{equ:c_constraint}, which means that $\lambda_{t_{k},j}=c_{t_{k},j}$ if $0<c_{t_{k},j}<\lambda_{0,j}e^{\epsilon_{0,j}}$, and $\lambda_{t_{k},j}=0$ if $c_{t_{k},j}\leq 0$, and $\lambda_{t_{k},j}=\lambda_{0,j}e^{\epsilon_{0,j}}$ if $c_{t_{k},j}\geq \lambda_{0,j}e^{\epsilon_{0,j}}$.\footnote{In the implementation, we actually uses $\lambda_{t_{k},j}=\min((1-\delta)\lambda_{0,j}e^{\epsilon_{0,j}}, \max(c_{t_k,j}, \delta\lambda_{0,j}e^{\epsilon_{0,j}}))$ in order to ensure that $0<\lambda_{t_{k},j}<\lambda_{0,j}e^{\epsilon_{0,j}}$, where $\delta=10^{-5}$.}
The control of the problem is $c_{t_k}=(c_{t_k, 1}, \ldots, c_{t_k, n_i})'$. The state variables of the problem are
the residual capacities $R_{t_{k}}=n^{c}-AN_{t_{k}}^{c}$.

Similar to the single product case, we cap the customer arrival process at $n^c$ to impose the capacity constraint. The capping becomes more complicated
in the multi-product case, since there can be more than one way to
allocate the remaining capacity of a direct flight to the itineraries. As a result, the function $G$ in \eqref{equ:G_func} is defined as
\begin{equation}\label{equ:def_G}
G(n^{c}, N_{t_{k}}^{c}, N_{t_{k+1}}^{\lambda}-N_{t_{k}}^{\lambda}) := \begin{cases}
N_{t_{k}}^{c} + N_{t_{k+1}}^{\lambda}-N_{t_{k}}^{\lambda}, & \mbox{ if }A(N_{t_{k}}^{c} + N_{t_{k+1}}^{\lambda}-N_{t_{k}}^{\lambda})\leq n^{c},\\
N^{c}_{t_{k}}+\Delta_{t_{k+1}}^{c}, & \mbox{ otherwise,}
\end{cases}
\end{equation}
where
\begin{align*}
\Delta_{t_{k+1}}^{c} =\arg  \max_{N}\ \ & p(\lambda_{t_{k}})'N\\
 \text{s.t.}\ \  &  AN\leq R_{t_{k}}, N\geq 0, N \leq N_{t_{k+1}}^{\lambda}-N_{t_{k}}^{\lambda}, N\in\mathbb{N}^{n_i}.
\end{align*}
The condition $A(N_{t_{k}}^{c} + N_{t_{k+1}}^{\lambda}-N_{t_{k}}^{\lambda})\leq n^{c}$ in \eqref{equ:def_G} is the case when capacity is not
exceeded, under which no capping is performed. If the capacity is
exceeded for some direct flights, then the residual capacities are allocated optimally to maximize
the revenue in the period $[t_k, t_{k+1}]$. This suggests that when tickets are
about to be sold out, the remaining seats will be allocated to those itineraries
that generate more revenue.

%

\subsubsection{Numerical Results}

We consider a particular case of problem \eqref{equ:dyna_pric_multi_prod_formulation} in which the flight network has 3 nodes, $\{1, 2, 3\}$, two direct flights
$\{1\to2, 2\to3\}$, and three itineraries $\{1\to2, 2\to3, 1\to2\to3\}$. Suppose the capacities
of the direct flights are $n^c = (n_{1\to2}^{c}, n_{2\to3}^{c})'=(300, 200)'$. Suppose $p_{0}=(p_{0,j})=(220, 250, 400)'$, $\epsilon_{0}=(\epsilon_{0,j})=(1.0, 1.2, 1.1)'$ and $\lambda_{0}=(\lambda_{0,j})=(300,300,300)'$. Let $T=1$ and $n_{T}=6$.
The state variables are the residual capacity $R=(R_{1\to2},R_{2\to3})'=n^{c}-AN^c$, where the matrix $A$ is given in \eqref{equ:A}.
We use linear functions of the state variables as the basis functions for the controls
$c=(c_{1\to2},c_{2\to3},c_{1\to2\to3})'$, i.e., the basis functions are
{\allowdisplaybreaks
\begin{align*}
&\phi_{1,1\to2}(R)=(1,0,0)',&&\phi_{1,2\to3}(R) =(0,1,0)',&&\phi_{1,1\to2\to3}(R)=(0,0,1)',\\
&\phi_{2,1\to2}(R)=(R_{1\to2},0,0)',&&\phi_{2,2\to3}(R)=(0,R_{1\to2},0)',&&\phi_{2,1\to2\to3}(R)=(0,0,R_{1\to2})',\\
&\phi_{3,1\to2}(R)=(R_{2\to3},0,0)',&&\phi_{3,2\to3}(R)=(0,R_{2\to3},0)',&&\phi_{3,1\to2\to3}(R)=(0,0,R_{2\to3})'.
\end{align*}}%
We denote the control parameter at period $t$ as $\theta_t=(\theta_{t,k,l})_{k=1,2,3, l\in\{1\to2, 2\to3, 1\to2\to3\}}$. Then, the control $c_t$ is
\be
c_t=\sum_{k=1}^3\sum_{l\in \{1\to2, 2\to3, 1\to2\to3\}}\theta_{t, k, l}\phi_{k,l}(R_t).\notag
\ee
We then apply the EM-C algorithm to the problem. We use $N=10,000$ sample paths in the simulation and use $m=2,000$
iterations in the SA algorithm. The initial control parameters $c^0_0$ and $\theta_t^0$
are set to be 
$c^0_0=(100, 100, 100)'$ and $\theta_{t,1,l}^{0}=100, \theta_{t,2,l}^{0}=\theta_{t,3,l}^{0}=0, \forall l, \forall t$.



\begin{figure}[htbp]
\begin{centering}
\includegraphics[width=\textwidth]{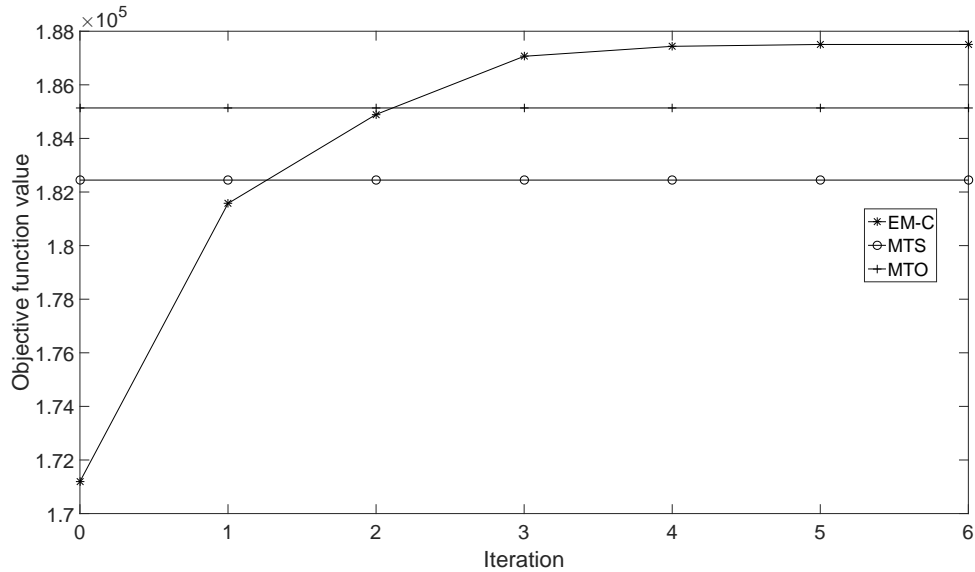}
\end{centering}
\caption{Objective function values of two heuristic methods, MTO and MTS, and the (rigorous) EM-C algorithm. The EM-C algorithm converged after 5 iterations. It uses $N=10,000$ sample paths in the simulation and $m=2,000$ iterations in the SA algorithm. It takes 1.3 hours for each iteration under a Matlab program. The bottleneck of the program is the iteration of the SA algorithm that need to be implemented by ``for loops'' in Matlab, which is known to be slow. The computation time can be greatly reduced if the algorithm is implemented by a compiled language such as C/C++. The optimal revenue obtained by the EM-C algorithm is $187292.9$ (with standard error 54.7). The standard error is equal to the sample standard deviation of the $N$ samples on the right-hand side of \eqref{equ:appro_E_0} divided by $\sqrt{N}$. \label{fig:stoch_inv_mul_opt_rev_iter}}
\end{figure}

Figure \ref{fig:stoch_inv_mul_opt_rev_iter} shows the objective function values of the EM-C algorithm over 6 iterations.
The EM-C algorithm converged after 5 iterations. It appears that the (rigorous) EM-C algorithm yields a much higher revenue than that from the two heuristic algorithms MTO and MTS. Table \ref{tab:inv_mult_rev} compares the distributions of revenues obtained by the EM-C algorithm, MTO, and MTS, respectively, using $N=10,000$ sample paths in the simulation.
The distribution of the total revenue under the EM-C algorithm has higher mean, higher skewness, smaller kurtosis, and higher quantile (at 1\%, 5\%, 95\%, 99\% level) than that under the MTO and MTS. Table \ref{tab:inv_mult_rev} also compares the revenues at the 3rd period and the 6th period obtained by the EM-C algorithm, MTO, and MTS, respectively. At the 3rd period, the EM-C algorithm performs similarly to MTO and MTS; however, at the 6th period, the EM-C algorithm performs better than the other two in terms of mean and standard error.

The total revenue generated by the EM-C method is 187,292.9 with standard error 54.7; while the two standard heuristic methods (MTO and MTS) give 185,090.2 and 182,433.5 with standard error 58.2 and 59.0 respectively. Thus, the EM-C method leads to an expected revenue increase of 1.2\% and 2.7\%, respectively. This is a very significant improvement, in view of the tight margin of airlines with large revenue and small profits.\footnote{For example, in 2015 Singapore Airlines had the revenue of
\$15,228 million, but the profit was \$801 million, which was only 5.26\% of the revenue. If the dynamic pricing of tickets can increase the revenue by 1\% without incurring additional cost, then it would lead to a significant increase in profit.}

Figure \ref{fig:Scatter-opt_rev} compares the histogram of the total revenue obtained under the EM-C algorithm, MTO, and MTS; the EM-C algorithm achieves a better right tail distribution than the other two policies.

\begin{table}[htbp]
\centering
\resizebox{\linewidth}{!}{
\begin{tabular}{rrrrrrrrrr}
\hline
\hline
\addlinespace
 & \multicolumn{3}{c}{total revenue} & \multicolumn{3}{c}{revenue at 3rd period} & \multicolumn{3}{c}{revenue at 6th period}\\
\addlinespace
\cline{2-4}\cline{5-7}\cline{8-10}
\addlinespace
 & EM-C & MTO & MTS & EM-C & MTO & MTS & EM-C & MTO & MTS\\
    mean  & 187292.9 & 185090.2 & 182433.5 & 31528.0 & 31669.5 & 30815.0 & 30199.0 & 26641.0 & 24655.8 \\
    stderr & 54.7  & 58.2  & 59.0  & 41.7  & 42.4  & 41.7  & 37.6  & 61.3  & 56.0 \\
    skewness & -0.31 & -1.42 & -0.99 & 0.16  & 0.15  & 0.18  & -0.30 & -0.74 & -0.31 \\
    kurtosis & 3.12  & 5.05  & 3.75  & 2.96  & 3.06  & 2.99  & 3.02  & 3.89  & 2.97 \\
    1\% quantile & 173321.7 & 166656.9 & 165253.7 & 22433.5 & 22389.7 & 21956.9 & 20804.8 & 8437.2 & 10390.9 \\
    5\% quantile & 177686.2 & 173154.8 & 170998.6 & 24893.0 & 24884.1 & 24174.8 & 23695.9 & 15162.5 & 14674.2 \\
    95\% quantile & 195699.0 & 190570.7 & 189292.2 & 38556.0 & 38833.6 & 37934.1 & 36028.7 & 35306.6 & 33237.3 \\
    99\% quantile & 198886.7 & 190958.9 & 189292.2 & 41559.2 & 41917.8 & 41113.9 & 37924.7 & 38364.4 & 36544.1 \\
\addlinespace
\hline
\hline
\end{tabular}}
\caption{Multi-product monopoly pricing: comparing the distributions of revenues obtained by the EM-C algorithm, MTO, and MTS respectively, using $N=10,000$ sample paths in the simulation. ``Std. error'' indicates the standard error of the mean estimate.
The distribution of the total revenue under the EM-C algorithm has higher mean, higher skewness, smaller kurtosis, and higher quantile than that under the MTO and MTS. At the 3rd period, the EM-C algorithm performs similarly to MTO and MTS; however, at the 6th period, the EM-C algorithm performs better than the other two in terms of mean and standard error.\label{tab:inv_mult_rev}}
\end{table}

\begin{figure}[htbp]
\centering
\includegraphics[width=0.33\textwidth]{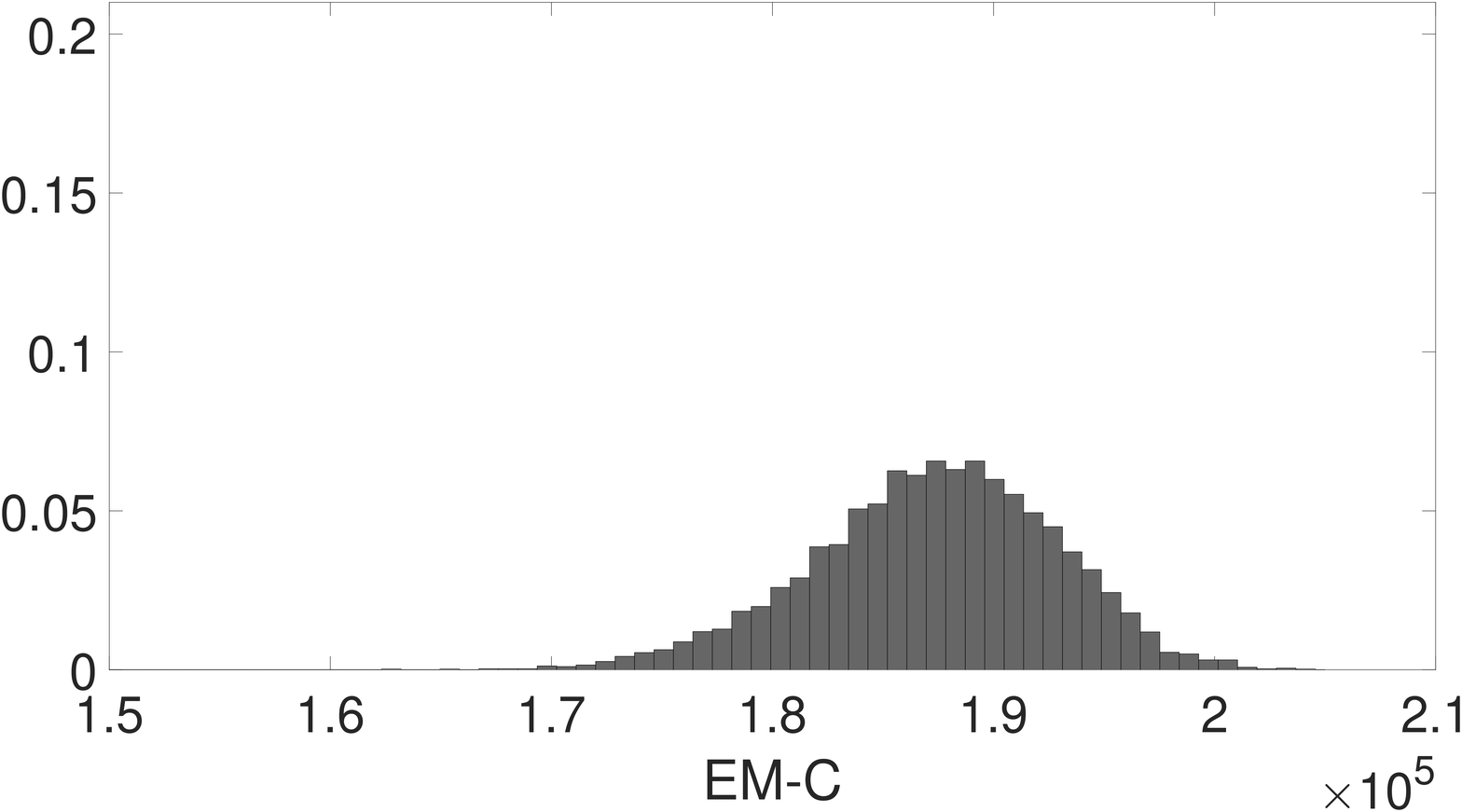}\includegraphics[width=0.33\textwidth]{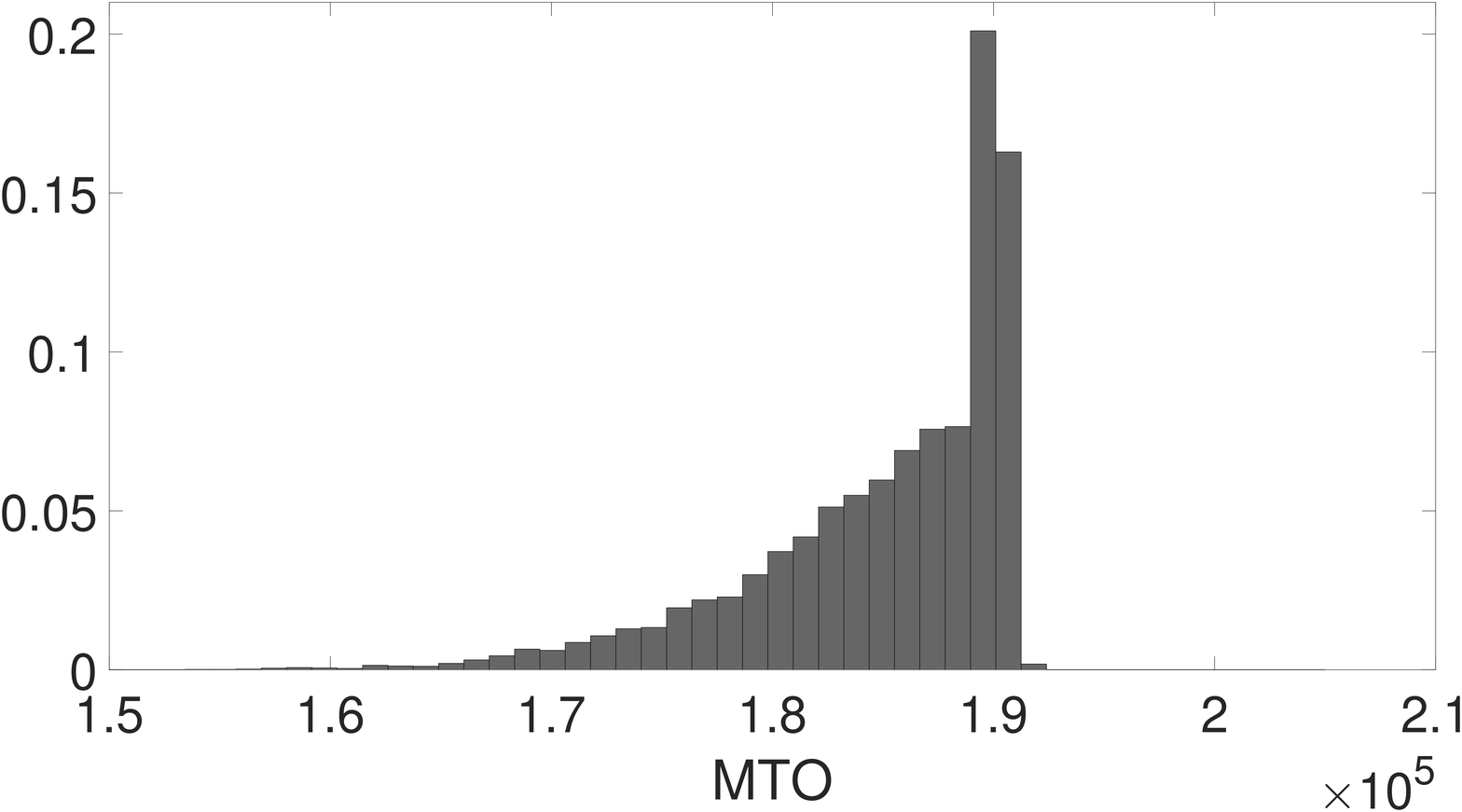}\includegraphics[width=0.33\textwidth]{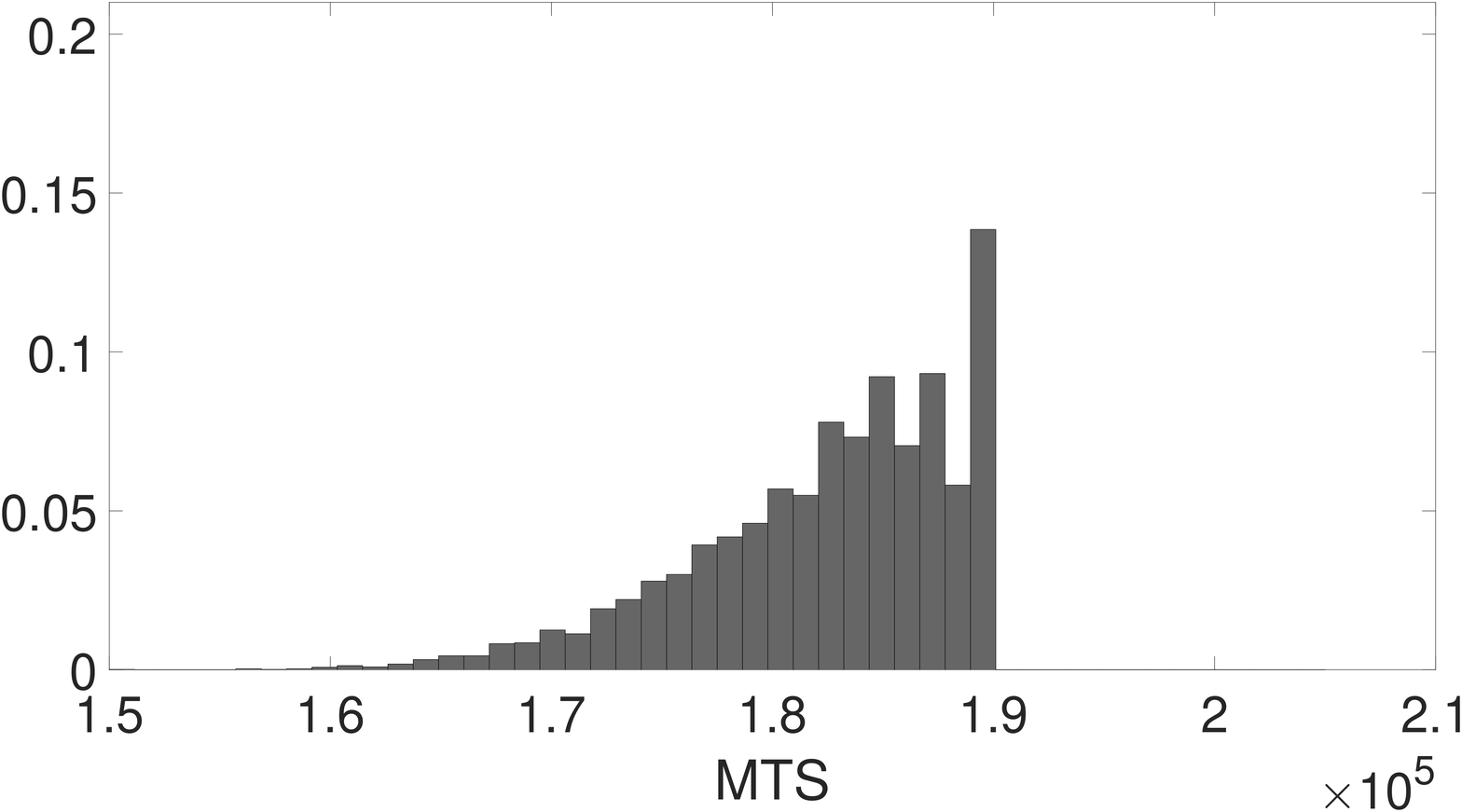}
\caption{Multi-product monopoly pricing: comparing the histograms of total revenue
obtained by EM-C algorithm (left), MTO (middle), and MTS (right). The histograms are based on 10,000 sample path of simulations.}
 \label{fig:Scatter-opt_rev}
\end{figure}

Figure \ref{fig:opt_dyn_c} compares the ticket pricing functions at the beginning of the 3rd period (i.e., at time $t_{2}=1/3$) and the beginning of the 6th period (i.e., at time $t_5=5/6$) under the EM-C algorithm with those under MTO/MTS, which are constant prices that do not change with the residual flight capacities. The prices under the MTO and those under the MTS are the same, although the two algorithms adopt different policies to allocate residual capacities of direct flights to itineraries.
Comparing the ticket pricing functions at the beginning of the 3rd period with those at the beginning of the 6th period, we
can see that the prices at the beginning of the 6th period under the EM-C algorithm are more sensitive with respect
to the residual capacities than those at the beginning of the 3rd period; this is reasonable as the optimal ticket prices should be more dependent on the residual capacity, to maximize revenue when the time left for the sale of the tickets is only one period.

\begin{figure}[htbp]
\centering
\includegraphics[width=\textwidth]{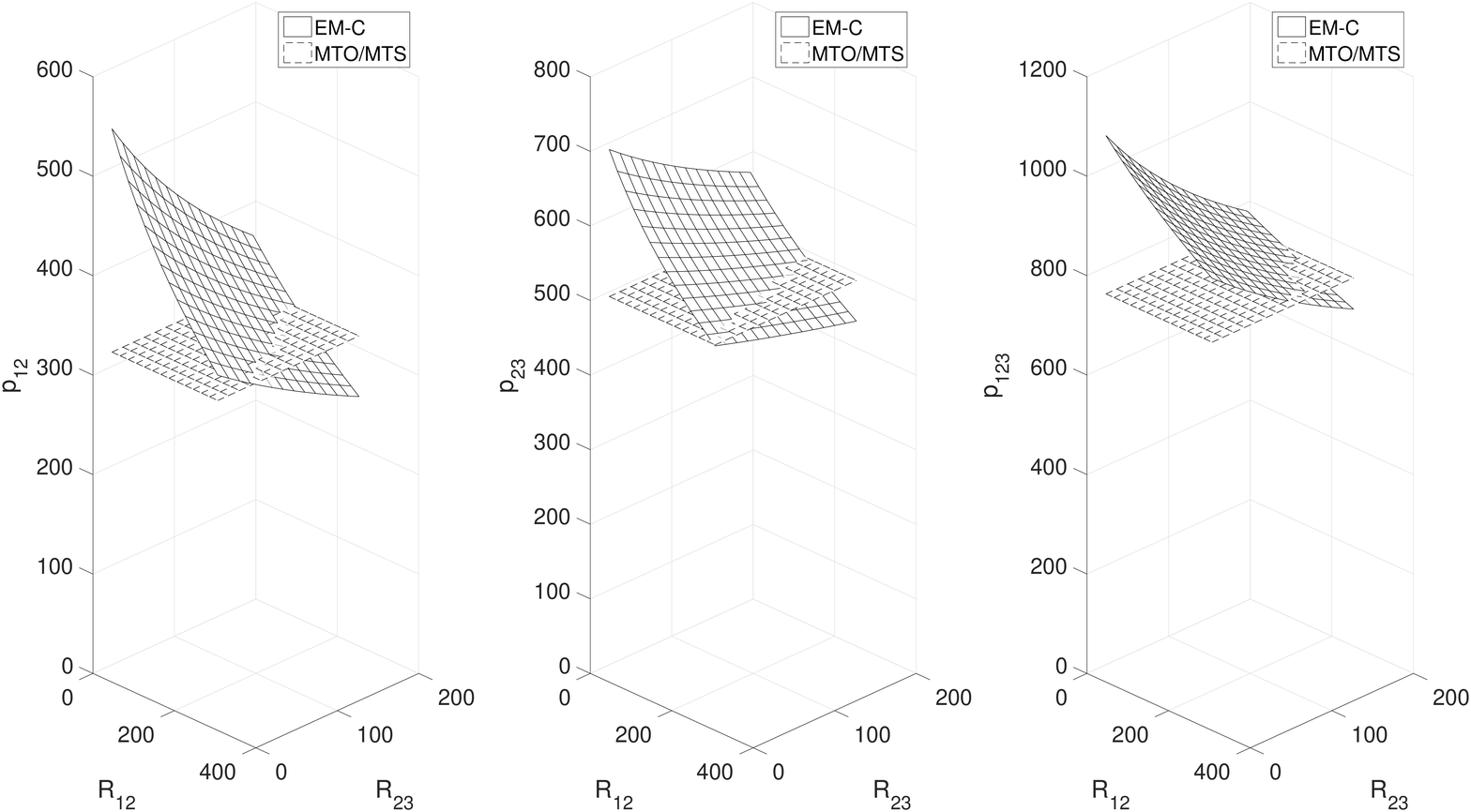}
\centering
\includegraphics[width=\textwidth]{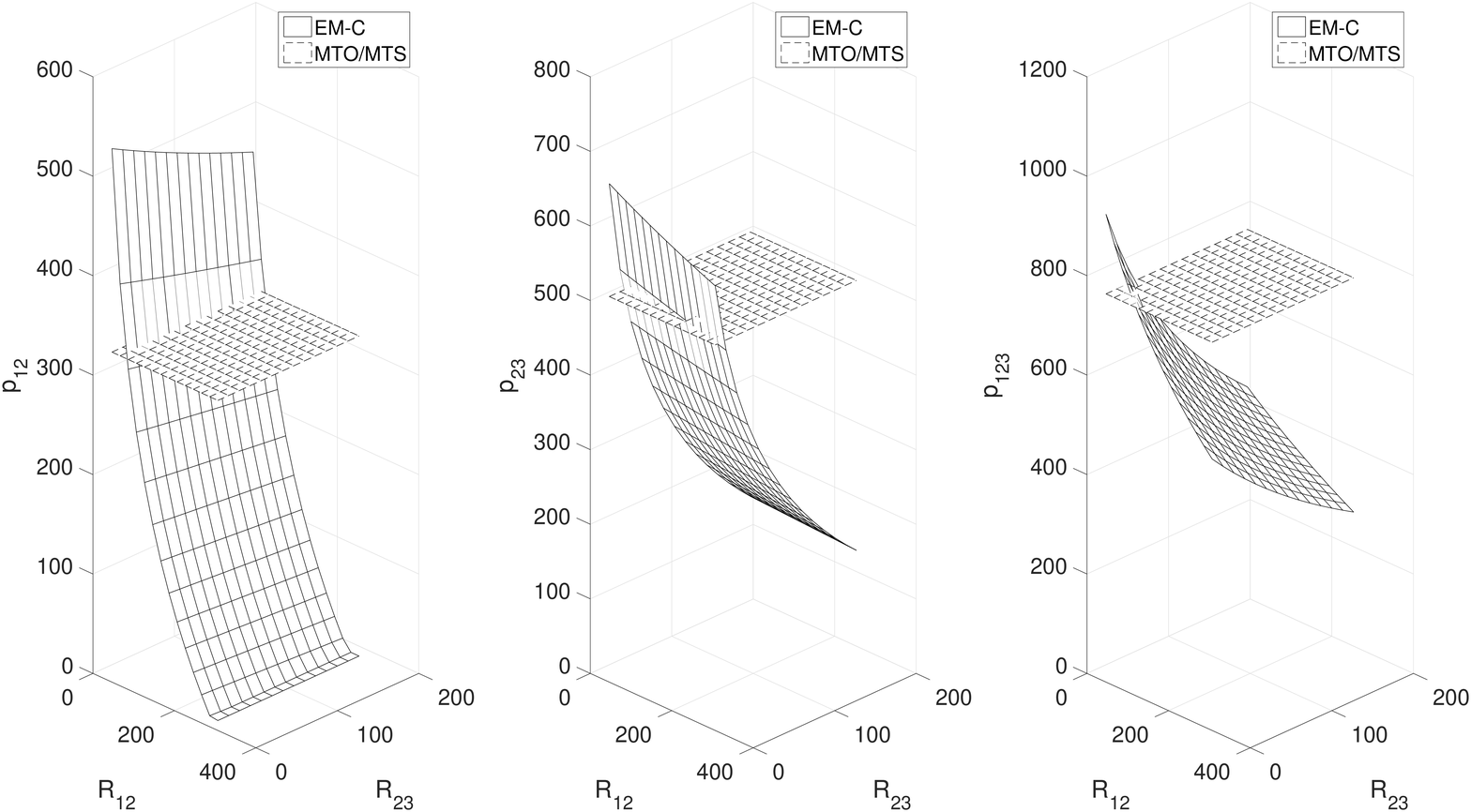}
\caption{Multi-product monopoly pricing: the airline ticket prices as functions of the residual capacities obtained under the EM-C algorithm and those under MTO/MTO. The first row plots the prices at the beginning of the 3rd period (i.e., at time $t_2=1/3$), and the second row plots the prices at the beginning of the 6th period (i.e., at time $t_{5}=5/6$). The prices under the MTO and those under the MTS are the same, although the two algorithms adopt different policies to allocate residual capacities of direct flights to itineraries.
$R_{12}$ and $R_{23}$ denote the residual capacities of the direct flight $1\to 2$ and $2\to 3$ respectively.
$p_{12}$, $p_{23}$, and $p_{123}$ denote the airline ticket prices for the itinerary $1\to2$, $2\to3$, and $1\to2\to 3$ respectively.}
\label{fig:opt_dyn_c}
\end{figure}

%
%
%
%
%

\section{Application 2: Real Business Cycle}\label{sec:business_cycle}

In this section, we apply the EM-C algorithm to study the problem of real business cycle
(see e.g., \cite*{kydland1982time}, \cite*{long1983real}, \cite*{hansen1985indivisible}, and \cite*{christiano1990linear}).
In the literature this is typically studied assuming infinite time horizon,
under which a stationary solution can be computed. In particular,
a log-linear linear-quadratic (LQ) approximation is used to approximate
the objective function, which transforms the problem to a well-studied
linear-quadratic programming problem. However, by using the EM-C algorithm, we show that there are very significant differences between
the finite time horizon and infinite time horizon problem.
Indeed, the policies used for the infinite horizon
problem can be very different from those for the finite time horizon problem, even if we take a 10-year time horizon; and our
algorithm yields much higher expected utility and more sensible control policy than the log-linear LQ method in the finite time horizon problem.

\subsection{The Model}

The standard infinite horizon problem in the literature is as follows
{\allowdisplaybreaks
\begin{align}\label{eq:real_bus_cyc}
  \max_{g_t}\ \   & E_0\left[\sum_{t=0}^{\infty}\beta^{t}u(k_{t},k_{t-1},x_{t})\right]=E_0\left[\sum_{t=0}^{\infty}\beta^{t}\frac{g_{t}^{1-\tau}}{1-\tau}\right]\\
 \text{s.t.}\ \  & x_{t+1} =\rho x_{t}+\epsilon_{t+1}, t \geq 0,\notag\\
 & k_{t}=\exp(x_{t})k_{t-1}^{\gamma}-g_{t}+(1-\delta)k_{t-1}, t \geq 0,\notag\\
 & g_{t}\in[0,\exp(x_{t})k_{t-1}^{\gamma}+(1-\delta)k_{t-1}], t \geq 0,\notag
\end{align}}%
where $(k_{-1}, x_0)$ is given as the initial state at period $t=0$; $x_{t}$ is the technology innovation level at period $t$, which evolves following
a time-series AR(1) model; $\exp(x_{t})k_{t-1}^{\gamma}$ is the total
production at period $t$; $g_{t}$ is the consumption at period $t$;
$k_{t}$ is the end-of-period-$t$ capital, which depends on the depreciation rate of capital $\delta$; $\tau\in(0,1)$ is the risk preference parameter. The logarithmic preference can be considered
as the limiting case when $\tau\rightarrow1-$. The state of the model at period $t$ is
$s_t = (k_{t-1},x_{t}).$

The main idea of log-linear LQ approximation is to approximate the
objective function with linear or quadratic functions, so that the
approximated problem fits into the linear quadratic programming framework,
which is analytically tractable. Let $\tilde k_{t}=\log(k_{t})$. The log-linear LQ approximation approach applies
a second-order Taylor series expansion to $\tilde u(\tilde k_{t}, \tilde k_{t-1},x_{t}):=u(\exp(\tilde k_{t}),\exp(\tilde k_{t-1}),x_{t})$
with respect to $(\tilde k_{t}, \tilde k_{t-1}, x_t)$ about $(\log k^*, \log k^*, x^*)$,
where $k^{*}$ and $x^{*}$ are the steady-state values of $k_t$ and $x_t$ of the non-stochastic version of \eqref{eq:real_bus_cyc} obtained by setting $\epsilon_t = 0$ for all $t$.
More precisely, the log-linear LQ approximation policy to the infinite horizon problem \eqref{eq:real_bus_cyc} is given by \citet[][Eq. (2.19)]{christiano1990linear}
\begin{equation*}
k_t = (k^*)^{(1-\lambda)}\exp\left[\frac{q}{k^*}\frac{\lambda}{1-\beta \rho\lambda}x_t\right]k_{t-1}^{\lambda}, t\geq 0,\ \text{where}
\end{equation*}
{\allowdisplaybreaks
\begin{align}
& x^{*} =0, k^{*} =\left\{ \frac{\beta\gamma\exp(x^{*})}{1-(1-\delta)\beta}\right\} ^{\frac{1}{1-\gamma}}, \phi =1+\frac{1}{\beta}+\frac{(1-\gamma)\left[1-(1-\delta)\beta\right]}{\tau}\frac{c^{*}}{k^{*}},\label{equ:k_star}\\
&\frac{c^{*}}{k^{*}}=\frac{\beta^{-1}-1+\delta(1-\gamma)}{\gamma}, q=\beta\left\{ (1-\rho)\left(\frac{c^{*}}{k^{*}}+\delta\right)+\frac{\rho\beta}{\tau}\left(\beta^{-1}-1+\delta\right)\frac{c^{*}}{k^{*}}\right\} k^{*},\notag\\
& \lambda\ \text{is the unique solution such that }\lambda^{2}-\phi\lambda+\frac{1}{\beta}=0\ \text{and}\ |\lambda|\leq 1.\notag
\end{align}}%
Now consider,  instead, a new problem of the finite horizon version as follows
{\allowdisplaybreaks
\begin{align}\label{eq:real_bus_cyc_finite_hor}
 \max_{c_t, 0\leq t\leq T-1}  & E_0\left[\sum_{t=0}^{T}\beta^{t}u(k_{t},k_{t-1},x_{t})\right]=E_0\left[\sum_{t=0}^{T}\beta^{t}\frac{g_{t}^{1-\tau}}{1-\tau}\right]\\
 \text{s.t.}\ \ \  & x_{t+1} =\rho x_{t}+\epsilon_{t+1}, 0\leq t \leq T-1,\notag\\
   & k_{t}=\exp(x_{t})k_{t-1}^{\gamma}-g_{t}+(1-\delta)k_{t-1}, 0\leq t \leq T-1,\notag\\
   & g_{t} = \frac{1}{1+\exp(c_t)}\left[\exp(x_{t})k_{t-1}^{\gamma}+(1-\delta)k_{t-1}\right], 0\leq t \leq T-1,\label{equ:g_t_range_cons}\\
   & g_T = \exp(x_{T})k_{T-1}^{\gamma}+(1-\delta)k_{T-1},\label{eq:terminal_g}\\
   & c_t \in \mathbb{R}, 0\leq t \leq T-1,\notag
\end{align}}%
where \eqref{equ:g_t_range_cons} is used to impose the constraint $0<g_t<\exp(x_{t})k_{t-1}^{\gamma}+(1-\delta)k_{t-1}$ for $t=0,\ldots,T-1$;
\eqref{eq:terminal_g} means that
the available capital at period $T$ is all consumed at period $T$. Hence, in terms of the notation of problem \eqref{equ:multi_per_obj},
the last period utility of problem \eqref{eq:real_bus_cyc_finite_hor} is given by
$$u_T(s_T, s_{T-1}, c_{T-1})=\beta^{T-1}\frac{g_{T-1}^{1-\tau}}{1-\tau}+\beta^{T}\frac{g_{T}^{1-\tau}}{1-\tau},\ \text{where}\ g_T\ \text{is given by \eqref{eq:terminal_g}.}$$
We shall solve this finite time horizon problem by using the EM-C algorithm.



\subsection{Numerical Results}

Suppose the problem parameters are $\beta=0.98$, $\gamma=0.33$,
$\tau=0.5$, $\delta=0.025$, $\rho=0.95$, and $\epsilon_{t}\overset{d}{\sim} N(0,\sigma_{e}^2)$
with $\sigma_{e}=0.1$. The initial state is $s_0=(k_{-1}, x_{0})=(k^{*}, 0)$, where $k^*$ is given in \eqref{equ:k_star}.
The control $c_t$ is specified as
$$c_t=\sum_{i=1}^4 \theta_{t,i}\phi_i(k_{t-1},x_{t}),$$
where $\{\phi_{i}, i=1, 2, 3, 4\}$ are the basis functions defined as
{\allowdisplaybreaks
\begin{align*}
&\phi_{1}(k_{t-1},x_{t})=1, \phi_{2}(k_{t-1},x_{t})=k_{t-1}, \phi_{3}(k_{t-1},x_{t})=\exp(x_{t}), \phi_{4}(k_{t-1},x_{t})=k_{t-1}^{\gamma}.
\end{align*}}%
In the EM-C algorithm, we initialize $c^0_0 = 0$ and $\theta^0_t = 0$ for all $t$.
We use $N=10,000$ sample paths in the simulation and use $m=2,000$ iterations in the SA algorithm.

We first solve the problem \eqref{eq:real_bus_cyc_finite_hor} for the case of 6 years, i.e. $T=6$.
In Figure \ref{fig:Utility_rbc}, cumulative expected utility of EM-C
optimal control and that of the log-linear LQ approximation are illustrated,
based on simulation of $N=10,000$ sample paths. The EM-C
algorithm converges after 3 iterations. It takes about
18 minutes to finish each iteration. The optimal utility obtained by the EM-C algorithm is $28.53$ (with standard error 0.008). The standard error is equal to the sample standard deviation of the $N$ samples on the right-hand side of \eqref{equ:appro_E_0} divided by $\sqrt{N}$.

\begin{figure}[htbp]
\begin{centering}
\includegraphics[width=\textwidth]{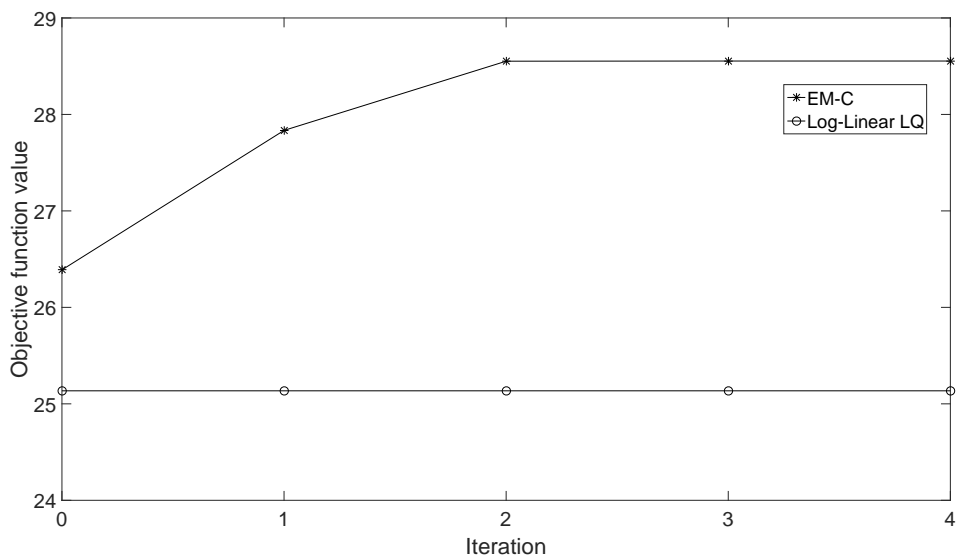}
\end{centering}
\caption{Real business cycle: the comparison of the utility value obtained by the EM-C algorithm and that by the log-linear LQ approximation for the problem \eqref{eq:real_bus_cyc_finite_hor} with $T=6$. The EM-C algorithm converges after 3 iterations. It takes about 18 minutes to finish each iteration. The optimal utility obtained by the EM-C algorithm is $28.53$ (with standard error 0.008). The standard error is equal to the sample standard deviation of the $N$ samples on the right-hand side of \eqref{equ:appro_E_0} divided by $\sqrt{N}$. \label{fig:Utility_rbc}}
\end{figure}

Figure \ref{fig:consumption_cem_vs_LQ} compares the optimal consumption $g_t$ as a function of the state $(k_{t-1}, x_t)$ under the EM-C control for the finite time horizon problem \eqref{eq:real_bus_cyc_finite_hor} and that under the log-linear LQ approach
for the infinite time horizon problem. It is clear from the figure that optimal consumption at period $t=5$ under the EM-C algorithm is much more sensitive to $k_{t-1}$ than that obtained by the log-linear LQ approach.


\begin{figure}[htbp]
\centering
\includegraphics[width=\textwidth]{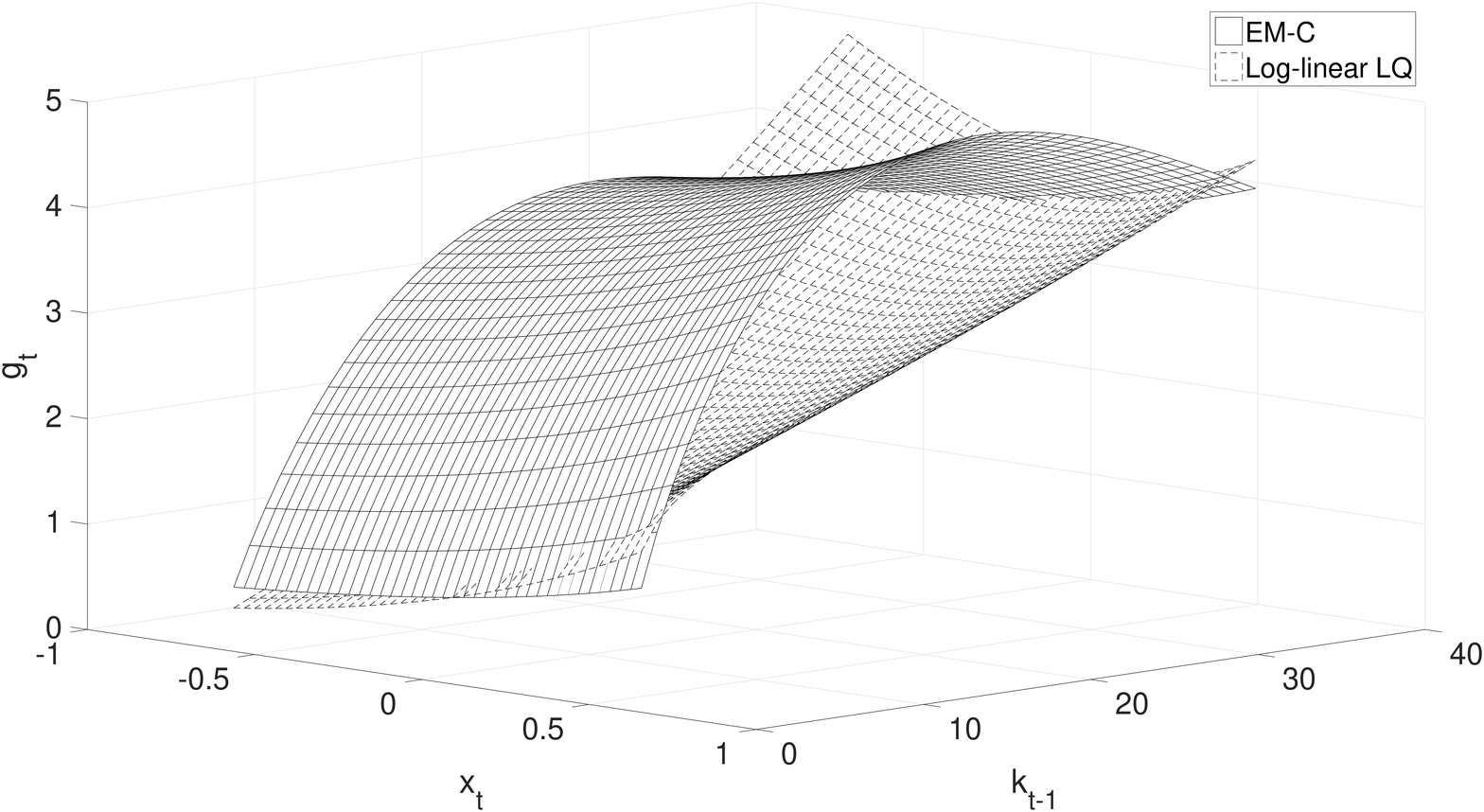}
\centering
\includegraphics[width=\textwidth]{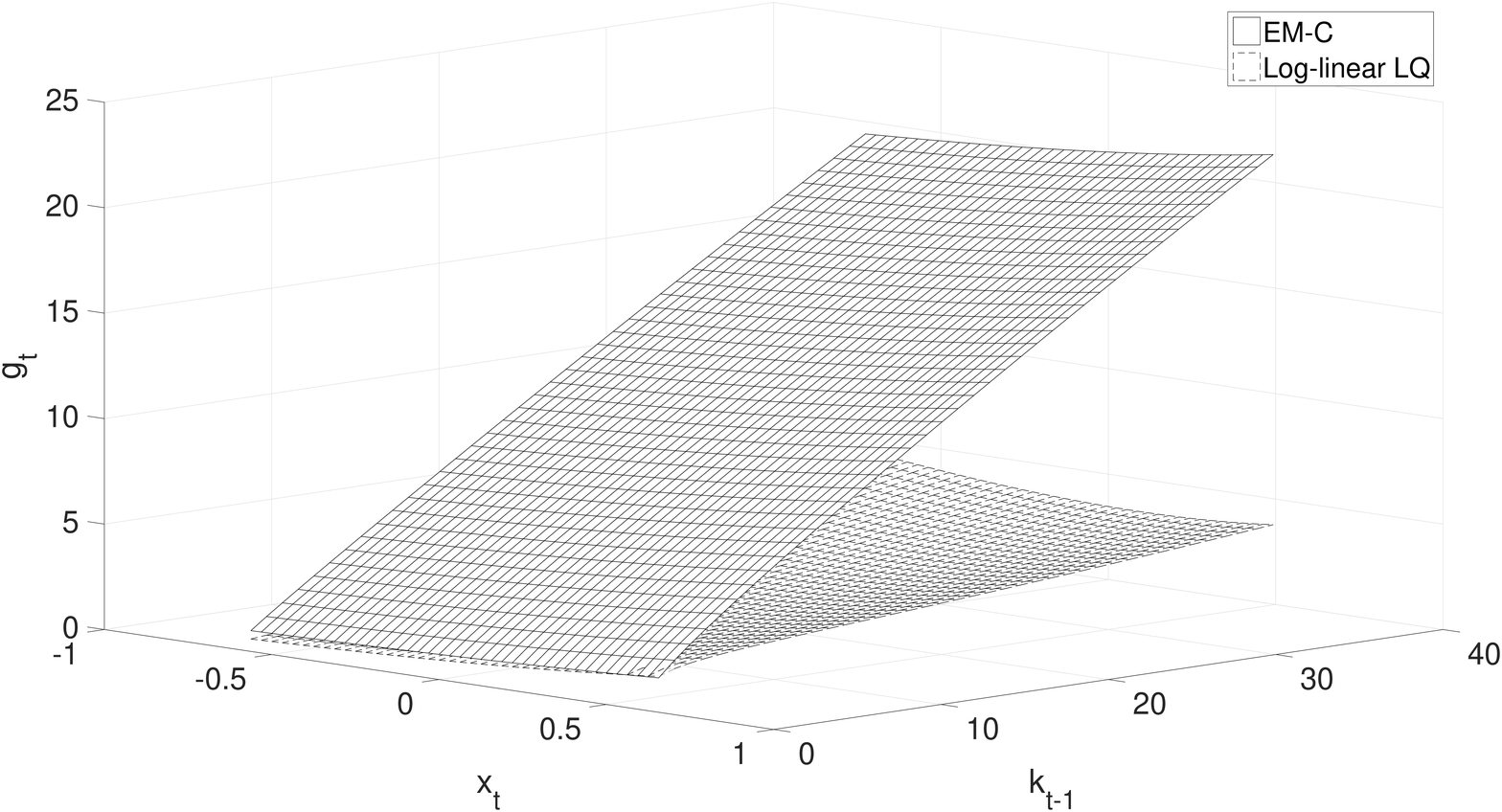}
\caption{Real business cycle problem: the comparison of the optimal consumption (control policy) $g_t$ as a function of the state $(k_{t-1}, x_t)$ under the EM-C control for the finite time horizon problem \eqref{eq:real_bus_cyc_finite_hor} with $T=6$
and that under the log-linear LQ approach
for the infinite time horizon problem.
The top figure plots $g_t$ for $t=2$, and the bottom one plots $g_t$ for $t=5$, which is the second to the last period.}
\label{fig:consumption_cem_vs_LQ}
\end{figure}

We then solve the problem \eqref{eq:real_bus_cyc_finite_hor} for the case of 10 years, i.e., $T=10$.
In Figure \ref{fig:Utility_rbc_T_10}, cumulative expected utility of EM-C
optimal controls and that of the log-linear LQ approximation are illustrated,
based on simulation of $N=10,000$ sample paths. The EM-C
algorithm converges after 3 iterations. It takes about 30 minutes to finish each iteration. The optimal utility obtained by the EM-C algorithm is $38.04$ (with standard error 0.016). The standard error is equal to the sample standard deviation of the $N$ samples on the right-hand side of \eqref{equ:appro_E_0} divided by $\sqrt{N}$.

\begin{figure}[htbp]
\begin{centering}
\includegraphics[width=\textwidth]{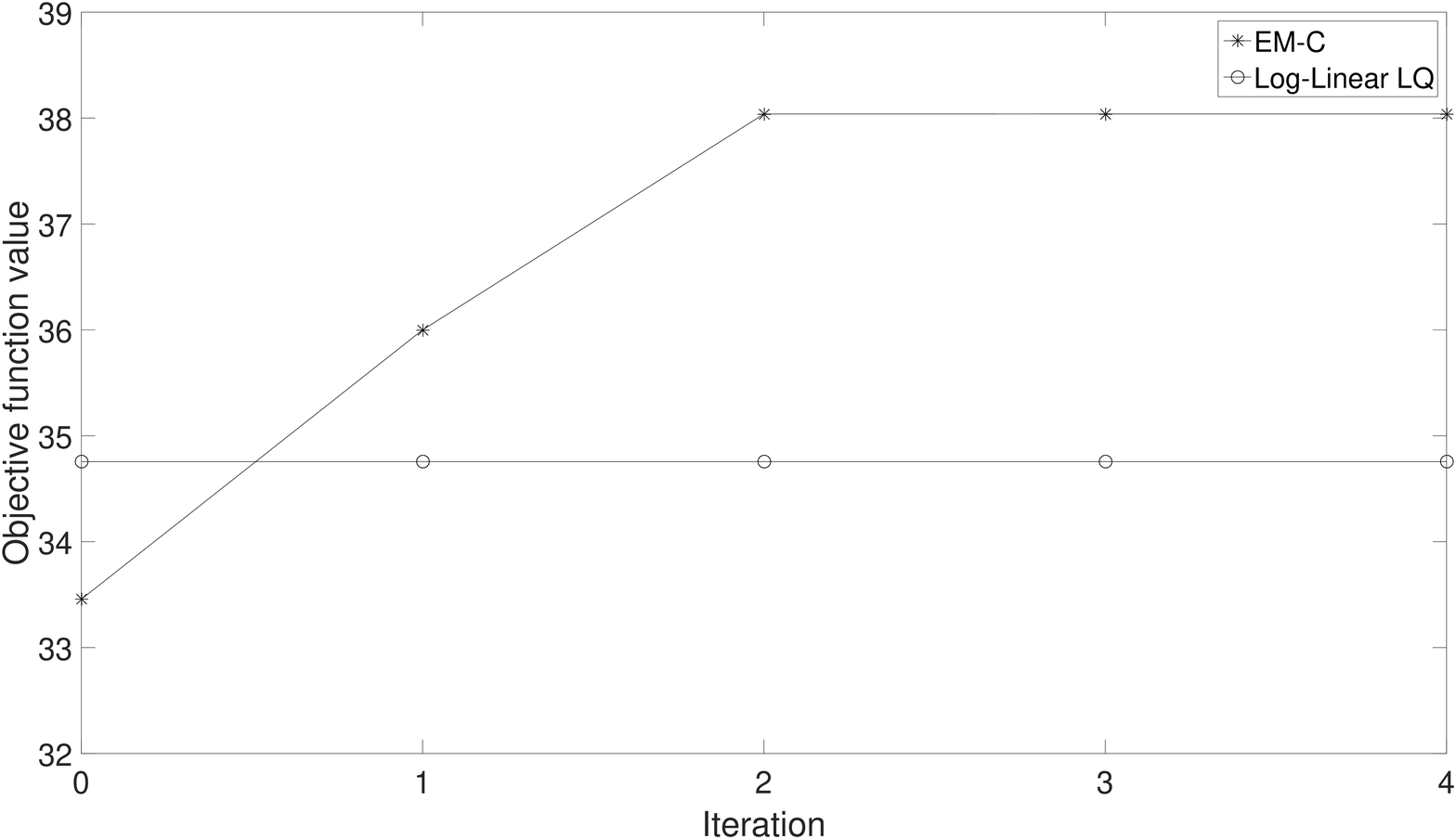}
\end{centering}
\caption{Real business cycle: the comparison of the utility value obtained by the EM-C algorithm and that by the log-linear LQ approximation for the problem \eqref{eq:real_bus_cyc_finite_hor} with $T=10$. The EM-C algorithm converges after 3 iterations. It takes about 30 minutes to finish each iteration. The optimal utility obtained by the EM-C algorithm is $38.04$ (with standard error 0.016). The standard error is equal to the sample standard deviation of the $N$ samples on the right-hand side of \eqref{equ:appro_E_0} divided by $\sqrt{N}$. \label{fig:Utility_rbc_T_10}}
\end{figure}

Figure \ref{fig:consumption_cem_vs_LQ_T_10} compares the optimal consumption (control policy)  $g_t$ as a function of the state $(k_{t-1}, x_t)$ under the EM-C control for the problem \eqref{eq:real_bus_cyc_finite_hor} with $T=10$ and that under the log-linear LQ approach
for the infinite time horizon problem. It is clear from the figure that optimal consumption at period $t=9$ under the EM-C algorithm is much more sensitive to $k_{t-1}$ than that obtained by the log-linear LQ approach.


\begin{figure}[htbp]
\centering
\includegraphics[width=\textwidth]{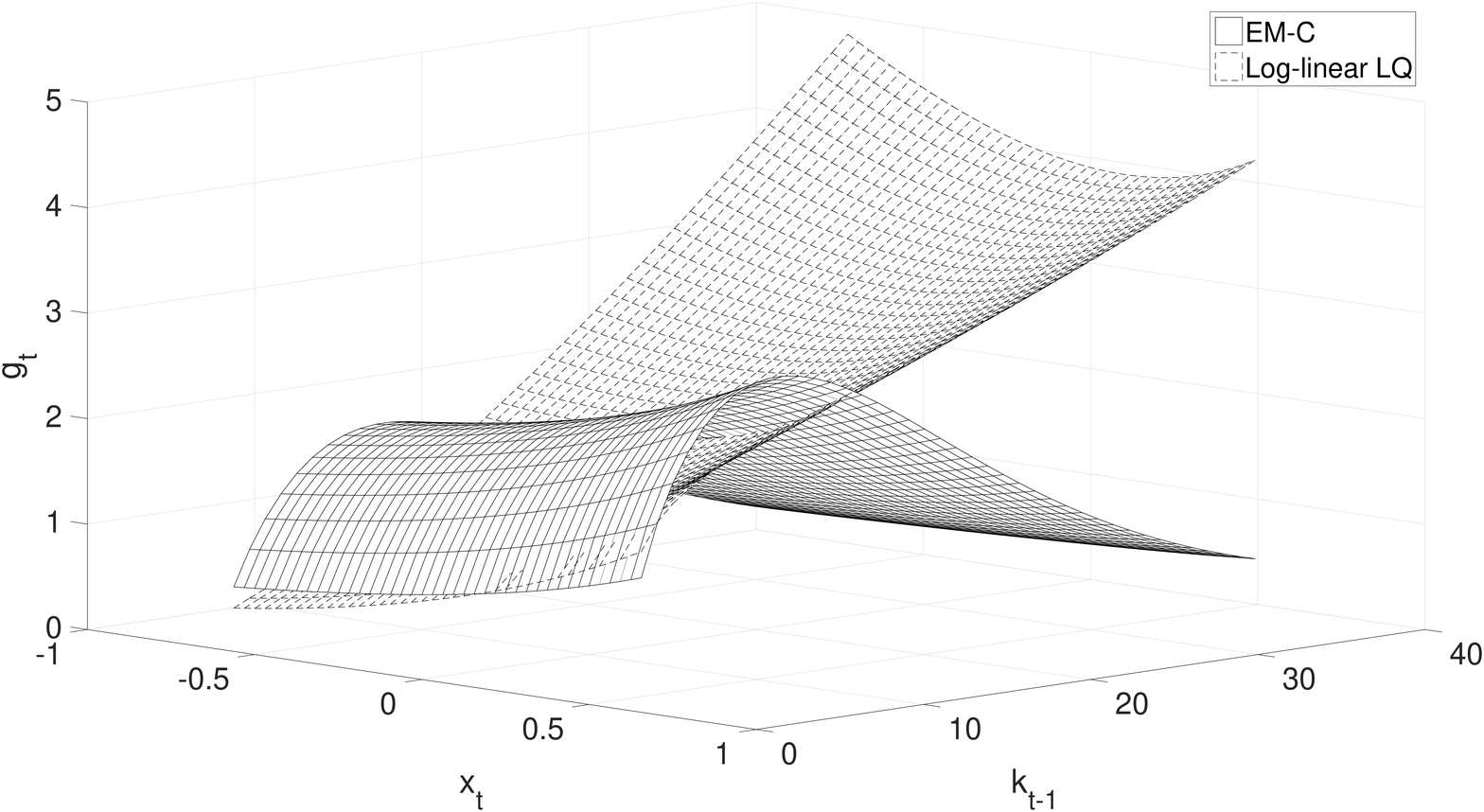}
\centering
\includegraphics[width=\textwidth]{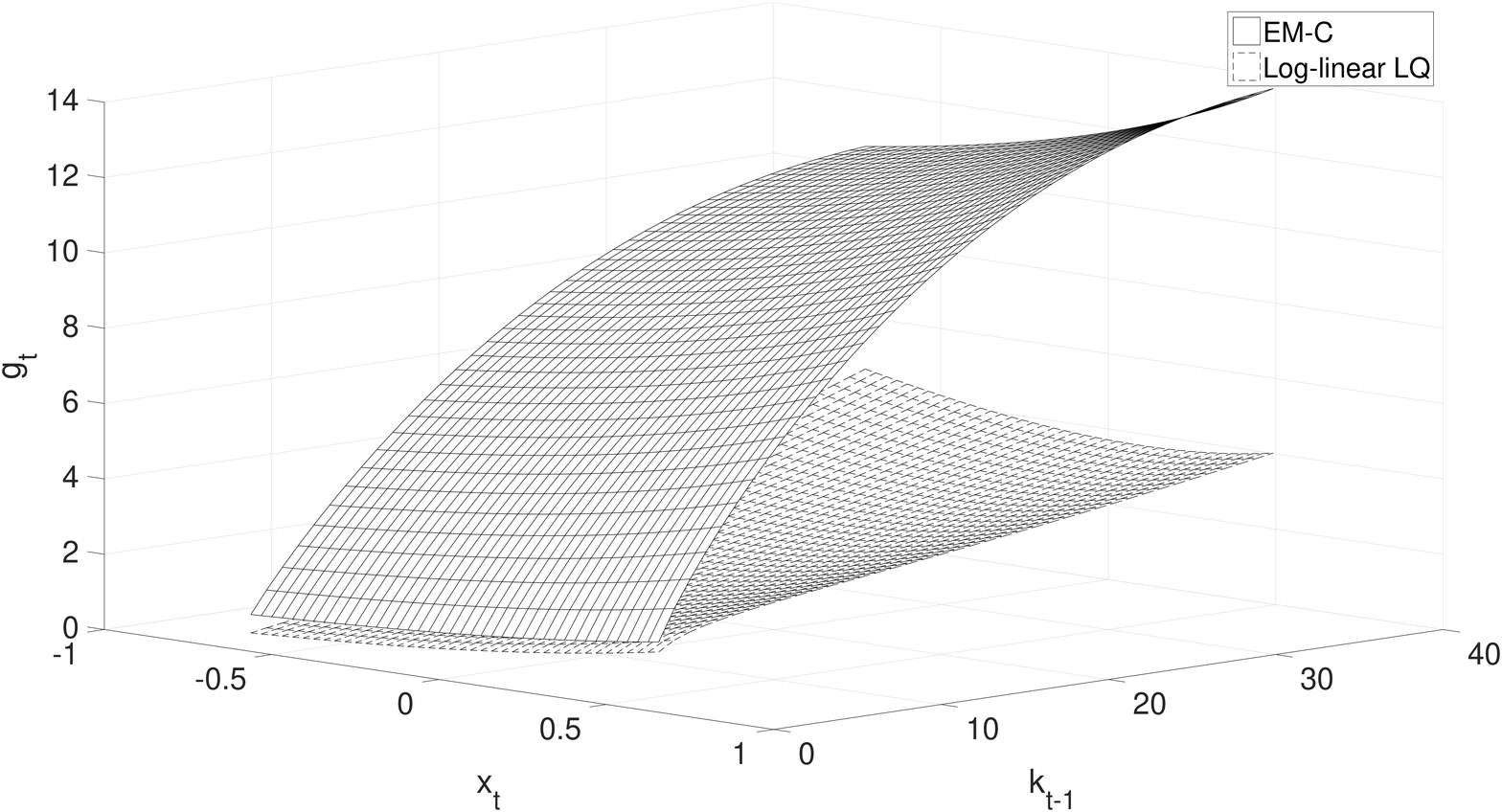}
\caption{Real business cycle problem: the comparison of the optimal consumption $g_t$ as a function of the state $(k_{t-1}, x_t)$ under the EM-C control for the problem \eqref{eq:real_bus_cyc_finite_hor} with $T=10$ and that under the log-linear LQ approach
for the infinite time horizon problem. The top figure plots $g_t$ for $t=2$, and the bottom one plots $g_t$ for $t=9$, which is the second to the last period.}
\label{fig:consumption_cem_vs_LQ_T_10}
\end{figure}

\appendix
\begin{appendices}


\section{A Simple Derivation}\label{app:simple_deriv}
We will show that \eqref{eq:monot} is equivalent to \eqref{equ:new_2}. In fact, by \eqref{equ:utility_func}, \eqref{eq:monot} is equivalent to
{\allowdisplaybreaks
\begin{align}\label{equ:new_1}
&  E_0\left[\sum_{j=0}^{t-1} u_{j+1}(s_{j+1}, s_{j}, c_j)\right.\notag\\
                    &\quad\quad\quad\quad\quad \left.+\sum_{j=t}^{T-1} u_{j+1}(s_{j+1},s_{j},c_j)\middle | c_0^{k-1}, \theta_1^{k-1}, \ldots, \theta_{t-1}^{k-1}, \theta_t^k, \theta_{t+1}^k, \ldots, \theta_{T-1}^k \right]\notag\\
\geq{} &  E_0\left[\sum_{j=0}^{t-1} u_{j+1}(s_{j+1}, s_{j}, c_j)\right.\notag\\
                    &\quad\quad\quad\quad\quad \left.+\sum_{j=t}^{T-1} u_{j+1}(s_{j+1},s_{j},c_j)\middle | c_0^{k-1}, \theta_1^{k-1}, \ldots, \theta_{t-1}^{k-1}, \theta_t^{k-1}, \theta_{t+1}^k, \ldots, \theta_{T-1}^k \right].
\end{align}}%
By \eqref{eq:c_t_s_t} and \eqref{eq:state_evo}, $\sum_{j=0}^{t-1} u_{j+1}(s_{j+1}, s_{j}, c_j)$ depends on the control parameters $(c_0, \theta_1, \ldots, \theta_{t-1})$ but not on the control parameter $(\theta_{t}, \ldots, \theta_{T-1})$. Therefore, we have
{\allowdisplaybreaks
\begin{align*}
&  E_0\left[\sum_{j=0}^{t-1} u_{j+1}(s_{j+1}, s_{j}, c_j)\middle | c_0^{k-1}, \theta_1^{k-1}, \ldots, \theta_{t-1}^{k-1}, \theta_t^k, \theta_{t+1}^k, \ldots, \theta_{T-1}^k \right]\\
={} &  E_0\left[\sum_{j=0}^{t-1} u_{j+1}(s_{j+1}, s_{j}, c_j)\middle | c_0^{k-1}, \theta_1^{k-1}, \ldots, \theta_{t-1}^{k-1}, \theta_t^{k-1}, \theta_{t+1}^k, \ldots, \theta_{T-1}^k \right],
\end{align*}}%
which implies that \eqref{equ:new_1} is equivalent to \eqref{equ:new_2}.

\section{Proof of Theorems}
\subsection{Proof of Theorem \ref{thm:monoto}}\label{app:proof_monoto}
\begin{proof}
In the EM-C algorithm, the iterations satisfy 
\eqref{eq:monot} and \eqref{eq:mono_0}. Therefore, we have
{\allowdisplaybreaks
  \begin{align}
    & \phantom{{}={}} U(c^{k-1}_0, \theta^{k-1}_1, \theta^{k-1}_2, \ldots, \theta^{k-1}_{T-3}, \theta^{k-1}_{T-2}, \theta^{k-1}_{T-1})\notag\\
&\leq  U(c^{k-1}_0, \theta^{k-1}_1, \theta^{k-1}_2, \ldots, \theta^{k-1}_{T-3}, \theta^{k-1}_{T-2}, \theta^{k}_{T-1})\notag\\
&\leq  U(c^{k-1}_0, \theta^{k-1}_1, \theta^{k-1}_2, \ldots, \theta^{k-1}_{T-3}, \theta^{k}_{T-2}, \theta^{k}_{T-1})\notag\\
&\leq  \cdots \notag\\
&\leq  U(c^{k-1}_0, \theta^{k}_1, \theta^{k}_2, \ldots, \theta^{k}_{T-3}, \theta^{k}_{T-2}, \theta^{k}_{T-1})\notag\\
&\leq  U(c^{k}_0, \theta^{k}_1, \theta^{k}_2, \ldots, \theta^{k}_{T-3}, \theta^{k}_{T-2}, \theta^{k}_{T-1}),\notag
  \end{align}}%
from which the proof is completed.
\end{proof}

\subsection{Proof of Theorem \ref{thm:convergence_red_em}}\label{app:proof_red_em}
\begin{proof}
We first recall the following definition in \citet{wu1983convergence}: A point-to-set map $\rho$ on $X$ is said to be closed at $x$, if $x^k \to x$, $x^k\in X$, $y^k \to y$, and $y^k\in \rho(x^k)$ imply $y\in \rho(x)$.
We also recall the following global convergence theorem (\citet[][p. 91]{Zangwill1969}): Let the sequence $\{x^k\}_{k=0}^{\infty}$ be generated by $x^k\in M(x^{k-1})$, where $M$ is a point-to-set map on $X$. Let a solution set $\Gamma \subset X$ be given, and suppose that: (i) all points $x^k$ are contained in a compact set $S \subset X$; (ii) $M$ is closed over the complement of $\Gamma$; (iii) there is a continuous function $\alpha$ on $X$ such that (a) if $x \notin \Gamma$, $\alpha(y) > \alpha(x)$ for all $y \in M(x)$, and (b) if $x \in \Gamma$, $\alpha(y) \geq \alpha(x)$ for all $y \in M(x)$. Then all the limit points of ${x^k}$ are in the solution set $\Gamma$ and $\alpha(x^k)$ converges monotonically to $\alpha(x^*)$ for some $x^*\in \Gamma$.

We now prove part (1) of the theorem. First, we show that $M$ is a closed point-to-set map on $\R^n$. Suppose
$$a^k=(a_0^k, a_1^k, \ldots, a^k_{T-1})\to \bar a=(\bar a_0, \bar a_1, \ldots, \bar a_{T-1}),\ \text{as}\ k\to\infty.$$
Suppose $b^k=(b_0^k, b_1^k, \ldots, b^k_{T-1})\in M(a^k)$ and $b^k\to \bar b=(\bar b_0, \bar b_1, \ldots, \bar b_{T-1})$ as $k\to\infty$. We will show that $\bar b\in M(\bar a)$. Since $b^k\in M(a^k)$, it follows that
{\allowdisplaybreaks
\begin{align}
&U(a_0^k, a_1^k, \ldots, a^k_{T-2}, b^k_{T-1})\geq U(a_0^k, a_1^k, \ldots, a^k_{T-2}, a^k_{T-1}), \forall k\notag\\
&U(a_0^k, a_1^k, \ldots, a^k_{t-1}, b^k_{t}, b^k_{t+1}, \ldots, b^k_{T-1})\geq U(a_0^k, a_1^k, \ldots, a^k_{t-1}, a^k_{t}, b^k_{t+1}, \ldots, b^k_{T-1}), \forall t, \forall k\notag\\
&U(b_0^k, b_1^k, \ldots, b^k_{T-1})\geq U(a_0^k, b_1^k, \ldots, b^k_{T-1}), \forall k.\notag
\end{align}}%
Letting $k\to \infty$ in the above inequalities, we obtain from the continuity of $U$ that
{\allowdisplaybreaks
\begin{align}
&U(\bar a_0, \bar a_1, \ldots, \bar a_{T-2}, \bar b_{T-1})\geq U(\bar a_0, \bar a_1, \ldots, \bar a_{T-2}, \bar a_{T-1}), \forall k\notag\\
&U(\bar a_0, \bar a_1, \ldots, \bar a_{t-1}, \bar b_{t}, \bar b_{t+1}, \ldots, \bar b_{T-1})\geq U(\bar a_0, \bar a_1, \ldots, \bar a_{t-1}, \bar a_{t}, \bar b_{t+1}, \ldots, \bar b_{T-1}), \forall t, \forall k\notag\\
&U(\bar b_0, \bar b_1, \ldots, \bar b_{T-1})\geq U(\bar a_0, \bar b_1, \ldots, \bar b_{T-1}), \forall k,\notag
\end{align}}%
which implies that $\bar b\in M(\bar a)$. Hence, $M$ is a closed point-to-set map on $\R^n$.

Second, we will verify that the conditions of the global convergence theorem cited above hold. Let $\alpha(x)$ be $U(x)$ and the solution set $\Gamma$ to be $\mathcal{S}$ or $\mathcal{M}$. Then, condition (i) follows from \eqref{equ:assump_1} and \eqref{equ:monoto}. Condition (ii) has been approved above. Condition (iii) (a) follows from \eqref{equ:cond_converg}. Condition (iii) (b) follows from \eqref{equ:monoto}. Hence, the conclusion of part (1) of the theorem follows from the global convergence theorem.

We move to prove part (2) of the theorem. To prove part (2), we only need to show that, under the condition of part (2), \eqref{equ:cond_converg} holds for any $x^{k-1}\notin \mathcal{S}$. For any such $x^{k-1}$, it follows from
the definition of the set $\mathcal{S}$ that
$\frac{\partial U(x^{k-1})}{\partial x^{k-1}}\neq 0$. Suppose $x^k = x^{k-1}$. Then, for each $j=T-1, T-2, \ldots, 1, 0$, $x_j^{k-1}$ maximizes the function $H_j(y):=U(x_0^{k-1}, x_1^{k-1}, \ldots, x_{j-1}^{k-1}, y, x_{j+1}^{k-1}, \ldots, x_{T-1}^{k-1})$, which implies that $\frac{\partial U(x^{k-1})}{\partial x_j^{k-1}}=0$ for all $j$, which contradicts to that $\frac{\partial U(x^{k-1})}{\partial x^{k-1}}\neq 0$. Hence, $x^k \neq x^{k-1}$. Let $i_0$ be the largest index $j\in\{0, 1, \ldots, T-1\}$ such that $x^k_j\neq x^{k-1}_j$. Then, by the specification of the algorithm,
$x_{i_0}^{k}$ maximizes the function $H_{i_0}(y):=U(x_0^{k-1}, x_1^{k-1}, \ldots, x_{i_0-1}^{k-1}, y, x_{i_0+1}^{k-1}, \ldots, x_{T-1}^{k-1})$ but $x_{i_0}^{k-1}$ does not. Hence,
$$H_{i_0}(x_{i_0}^k)>H_{i_0}(x_{i_0}^{k-1})=U(x^{k-1}),$$
which implies that
$$U(x^{k})\geq H_{i_0}(x_{i_0}^k)>U(x^{k-1}).$$
Hence, \eqref{equ:cond_converg} holds for any $x^{k-1}\notin \mathcal{S}$ for the EM-C algorithm. Then, the conclusion of part (2) follows from part (1) of the theorem, which has been proved.
\end{proof}

\subsection{Proof of Theorem \ref{thm:converg_x}}\label{app:proof_cong_x}
\begin{proof}
We first prove part (1). By Theorem \ref{thm:convergence_red_em}, all the limit points of $\{x^k\}_{k\geq 0}$ are in $\cS(U^*)=\{x^*\}$ (resp. $\M(U^*)=\{x^*\}$). Hence, any converging subsequence of $\{x^k\}_{k\geq 0}$ converges to $x^*$, which implies that $x^k\to x^*$ as $k\to\infty$. Hence, part (1) of the theorem holds.
Next we prove part (2). By the condition \eqref{equ:assump_1}, $\{x^k\}$ is a bounded sequence. By Theorem 28.1 of \citet*{Ostrowski-1966}, the set of limit points of the bounded sequence $\{x^k\}$ with $\|x^{k+1}-x^k\|\to 0$ as $k\to\infty$ is compact and connected. In addition, by Theorem \ref{thm:convergence_red_em}, all the limit points of $\{x^k\}$ are in $\cS(U^*)$ (resp. $\M(U^*)$). Hence, the conclusion of part (2) follows.
\end{proof}

\section{Stochastic Approximation Algorithm for Solving Problems \eqref{eq:opt_t_S} and \eqref{eq:opt_0}}\label{sec:Stochastic-Approximation}

By 
\eqref{equ:simu_E_t} and \eqref{eq:simu_E_0}, the problems 
\eqref{eq:opt_t_S} and \eqref{eq:opt_0} have the general form
\begin{equation}\label{eq:sa_problem_general}
        \max_{y\in \Upsilon}
       E_{0}\left[\tilde f(y)\right],
\end{equation}
where $\Upsilon\subset \mathbb{R}^m$, $\tilde f(\cdot)$ is defined in 
\eqref{equ:appro_E_t} and \eqref{equ:appro_E_0}, respectively.
Let $\{a^k=(a^k_1, \ldots, a^k_m)\}_{k=1}^{\infty}$ and $\{b^k=(b^k_1, \ldots, b^k_m)\}_{k=1}^{\infty}$ be two deterministic vector sequences such that
{\allowdisplaybreaks
\begin{align*}
& a^k > 0,\ b^k > 0,\forall k,\\
& a^{k} \rightarrow 0,\ b^{k}\rightarrow 0,\ \text{as}\ k\to\infty,\\
& \sum_{k=1}^{\infty}a^{k}_i=\infty,\ \sum_{k=1}^{\infty}\frac{(a^{k}_i)^{2}}{(b^{k}_i)^{2}}<\infty,\ \text{as}\ k\to\infty,\ \forall i=1,\ldots, m.
\end{align*}}%
Let $\delta_i=(0,\ldots, 0, 1, 0, \ldots, 0)'$ be the $i$th standard basis of $\mathbb{R}^m$.

The SA algorithm for solving the problem \eqref{eq:sa_problem_general} is then given by
\begin{enumerate}
  \item
  Initialize $y^1\in \mathbb{R}^m$ and $k=1$.
  \item
  Iterate $k$ until some stopping criteria are met. At the $(k+1)$th iteration, update $y^{k}$ to be
  \be
  y^{k+1}_i = y^{k}_i + a^{k}_i \left(\frac{\tilde f(y^{k}+c^{k}_i\delta_i) - \tilde f(y^{k}-c^{k}_i\delta_i) }{c^{k}_i}\right),\ i=1,\ldots, m.\notag
  \ee
\end{enumerate}

%

To reduce variance in the SA algorithm, at each iteration $k$, we use the common random numbers for generating the $2m$ random variables
$\tilde f(y^{k}+c^{k}_i\delta_i)$ and $\tilde f(y^{k}-c^{k}_i\delta_i)$, $i=1,\ldots, m$.

In the numerical examples of this paper, we used the scaled-and-shifted stochastic approximation
(SSSA) algorithm in \citet*{broadie2011general}, where the sequence $a^k$ and $b^k$ are chosen as $a^{k}=k^{-1}\cdot a^0$
and $b^{k}=k^{-1/4}\cdot b^0$, where $a^0$ and $b^0$ are some initial vector, usually chosen to be proportional to the scale of $y$.

\section{The EM-C Algorithm for the General Control Problem \eqref{equ:multi_per_obj_gen}}\label{app:CEM-general-control}

The EM-C algorithm also works for the general control problem \eqref{equ:multi_per_obj_gen} in which the utility function may not be time-separable. In such problems, 
\eqref{eq:monot} can no longer be simplified to be \eqref{equ:new_2}. To make the EM-C algorithm work for such problems, at each $t$th period of iteration $k$, one just need to
set $\theta^{k}_{t}$ as a suboptimal (optimal) solution to
            \begin{align}\label{eq:opt_t_S_gen}
       &    \max_{\theta_{t}\in \Theta_t}
                    E_0\left[u(s_0, c_0, s_1, c_1, \ldots, s_{T-1}, c_{T-1}, s_{T})\middle | c_0^{k-1}, \theta_1^{k-1}, \ldots, \theta_{t-1}^{k-1}, \theta_t, \theta_{t+1}^k, \ldots, \theta_{T-1}^k\right].
            \end{align}


The convergence theorems \ref{thm:monoto}, \ref{thm:convergence_red_em}, and \ref{thm:converg_x} also hold for the EM-C algorithm for the general control problem \eqref{equ:multi_per_obj_gen}.

In the implementation of the EM-C algorithm for the general control problem \eqref{equ:multi_per_obj_gen}, one needs to solve the subproblem 
\eqref{eq:opt_t_S_gen}, where
the objective function is:
{\allowdisplaybreaks
\begin{align}
&E_{0}\left[u(s_0, c_0, s_1, c_1, \ldots, s_{T-1}, c_{T-1}, s_{T})\middle | c_0^{k-1}, \theta_1^{k-1}, \ldots, \theta_{t-1}^{k-1}, \theta_t, \theta_{t+1}^k, \ldots, \theta_{T-1}^k\right]\notag\\
={} & E_0\left[\frac{1}{N}\sum_{l=1}^N
u(s_0, c_0, s^k_{1,l}, c^k_{1,l}, \ldots, s^k_{t,l}, c^k_{t,l}(\theta_t), \ldots, s^k_{T-1,l}(\theta_{t}), c^k_{T-1,l}(\theta_t), s^k_{T,l}(\theta_{t}))\right].\notag
\end{align}}%
Then, a SA algorithm can use
\begin{equation*}
\tilde f(\theta_{t}):=\frac{1}{N}\sum_{l=1}^N
u(s_0, c_0, s^k_{1,l}, c^k_{1,l}, \ldots, s^k_{t,l}, c^k_{t,l}(\theta_t), \ldots, s^k_{T-1,l}(\theta_{t}), c^k_{T-1,l}(\theta_t), s^k_{T,l}(\theta_{t}))
\end{equation*}
as an approximation to the objective function when solving the subproblem \eqref{eq:opt_t_S_gen}.

\end{appendices}

\bibliographystyle{dcu} 
\bibliography{dynamic_EM_ref}


\end{document}